\documentclass[12pt]{article}
\usepackage[margin=1in]{geometry}
\usepackage{graphicx,amssymb,amsmath}
\usepackage{color}
\usepackage{thmtools, hyperref}
\usepackage{algorithm2e}

\newtheorem{theorem}{Theorem}[section]
\newtheorem{lemma}[theorem]{Lemma}
\newtheorem{proposition}{Proposition}
\newtheorem{corollary}[theorem]{Corollary}

\newenvironment{proof}[1][Proof]{\begin{trivlist}
\item[\hskip \labelsep {\bfseries #1}]}{\end{trivlist}}

\newcommand{\qed}{\nobreak \ifvmode \relax \else
      \ifdim\lastskip<1.5em \hskip-\lastskip
      \hskip1.5em plus0em minus0.5em \fi \nobreak
      \vrule height0.75em width0.5em depth0.25em\fi}

\begin{document}
\title{Unfolding Genus-2 Orthogonal Polyhedra\\ 
with Linear Refinement}
\author{Mirela Damian%
\thanks{Department of Computer Science, Villanova University, Villanova, PA 19085, USA. \protect\url{mirela.damian@villanova.edu}}
\and
Erik Demaine%
\thanks{Computer Science and Artificial Intelligence Laboratory, 
Massachusetts Institute of Technology, 
32 Vassar St., Cambridge, MA 02139, USA. 
\protect\url{edemaine@mit.edu}}
\and
Robin Flatland
\thanks{Department of Computer Science, 
Siena College, 
Loudonville, NY 12211, USA.
\protect\url{flatland@siena.edu}}
\and
Joseph O'Rourke%
\thanks{Department of Computer Science, 
Smith College, Northampton, MA 01063, USA. 
\protect\url{jorourke@smith.edu}}
}

\date{}
\maketitle

\begin{abstract}
We show that every orthogonal polyhedron of genus $g \le 2$ can be unfolded
without overlap while using only a linear number of orthogonal cuts (parallel to the polyhedron edges). 
This is the first result on unfolding general orthogonal polyhedra beyond genus-0. Our 
unfolding algorithm  
relies on the existence of at most $2$ special leaves in what we call the ``unfolding tree'' 
(which ties back to the genus), so unfolding polyhedra of genus $3$ and beyond requires new techniques. 
\end{abstract}


\section{Introduction}
\label{sec:intro}
An \emph{unfolding} of a polyhedron is produced by cutting its surface in such a way that it can be flattened to 
a single, connected piece without overlap.
In an \emph{edge unfolding}, the cuts are restricted to the polyhedron's edges,
whereas in a \emph{general unfolding}, cuts can be made anywhere on the surface.
It is known that edge cuts alone are not sufficient to
guarantee an unfolding for non-convex polyhedra~\cite{Bern-Demaine-Eppstein-Kuo-Mantler-Snoeyink-2003,BDDLOORW1998},
and yet it is an open question as to whether all non-convex
polyhedra have a general unfolding.
In contrast, it is unknown 
whether every convex polyhedron has an edge unfolding~\cite[Ch.~22]{Demaine-O'Rourke-2007}, but all convex polyhedra
have general unfoldings~\cite[Sec.~24.1.1]{Demaine-O'Rourke-2007}.

The successes to date in unfolding non-convex objects have been
with the class of orthogonal polyhedra. This class consists of polyhedra whose edges and faces all meet at right angles.
Because not all orthogonal polyhedra have
edge unfoldings (even for simple examples such as a box with a smaller
box extruding out on top)~\cite{BDDLOORW1998}, the unfolding algorithms use additional non-edge cuts.
These additional cuts generally follow one of two models.
In the \emph{grid unfolding model}, the orthogonal polyhedron is sliced by axis perpendicular planes passing through each vertex, and cuts are allowed along
the slicing lines where the planes intersect the surface.
In the \emph{grid refinement model}, each rectangular grid face
under the grid unfolding model is further subdivided by
an $(a \times b)$ orthogonal grid, for some positive integers $a, b \ge 1$, and cuts are also allowed
along any of these grid lines.

There have been three phases of research on unfolding orthogonal polyhedra.
The first phase focused on unfolding special subclasses, which included
orthotubes~\cite{BDDLOORW1998}, well-separated orthotrees~\cite{Damian-Flatland-Meijer-O'Rourke-2005-orthotrees}, orthostacks~\cite{BDDLOORW1998,Damian-Meijer-2004-orthostacks}, and Manhatten towers~\cite{Damian-Flatland-O'Rourke-2008-manhattan}.
These algorithms use the grid unfolding
model or the grid refinement model with a constant amount of
refinement (i.e., $a$ and $b$ are both constants).

The second phase began with the discovery of the epsilon-unfolding algorithm~\cite{Damian-Flatland-O'Rourke-2007-epsilon} which unfolds all genus-0 orthogonal polyhedra. A key component of the unfolding algorithm is the determination of a spiral path on the surface of the polyhedron that unfolds to a planar monotone staircase, from which the rest of the surface attaches (without overlap) above and below. A drawback of that algorithm is that it requires an exponential amount of grid refinement. Subsequent improvements, however, reduced the amount of refinement to quadratic~\cite{Damian-Demaine-Flatland-2014-delta}, and then to linear~\cite{Chang2015}, with both algorithms following the basic outline of~\cite{Damian-Flatland-O'Rourke-2007-epsilon}.

The third phase of research addresses the next obvious challenge, that of unfolding
higher genus polyhedra. To our knowledge, the
only attempt at this is that of Liou et al.~\cite{Liou-Poon-Wei-2014-onelayer}.
They provide an algorithm for unfolding a special subclass of one-layer orthogonal polyhedra in which all faces are unit squares and the holes are unit
cubes.

Thus the question of whether all orthogonal polyhedra of genus greater than zero can be unfolded is still wide open, and is in a sense the natural endpoint of this line of investigation.
In this paper we take a significant step toward this goal by presenting a new algorithm that unfolds all orthogonal polyhedra of genus $1$ or $2$. The algorithm extends ideas from~\cite{Chang2015} by making several key modifications to circumvent issues that arise from the presence of holes. As in~\cite{Chang2015}, our algorithm only
requires linear refinement.
         
\subsection{Notation and Definitions}
Let $P$ be an orthogonal polyhedron of genus $g \le 2$, whose edges are parallel to the coordinate axes and whose 
surface is a $2$-manifold. 
We take the $z$-axis to define the \emph{vertical} direction, 
the $x$-axis to determine \emph{left} and \emph{right},
and the $y$-axis to determine \emph{front} and \emph{back}.
We consistently take the viewpoint from $y=-\infty$.
The faces of $P$ are distinguished by their outward normal:
forward is $-y$; rearward is $+y$; left is $-x$; right is $+x$; bottom is $-z$; and top is $+z$.%
\footnote{The $\pm y$ faces are given the awkward names
  ``forward'' and ``rearward'' to avoid confusion with other uses
  of ``front'' and ``back'' introduced later.}

\begin{figure}[htbp]
\centering
\includegraphics[width=.98\linewidth]{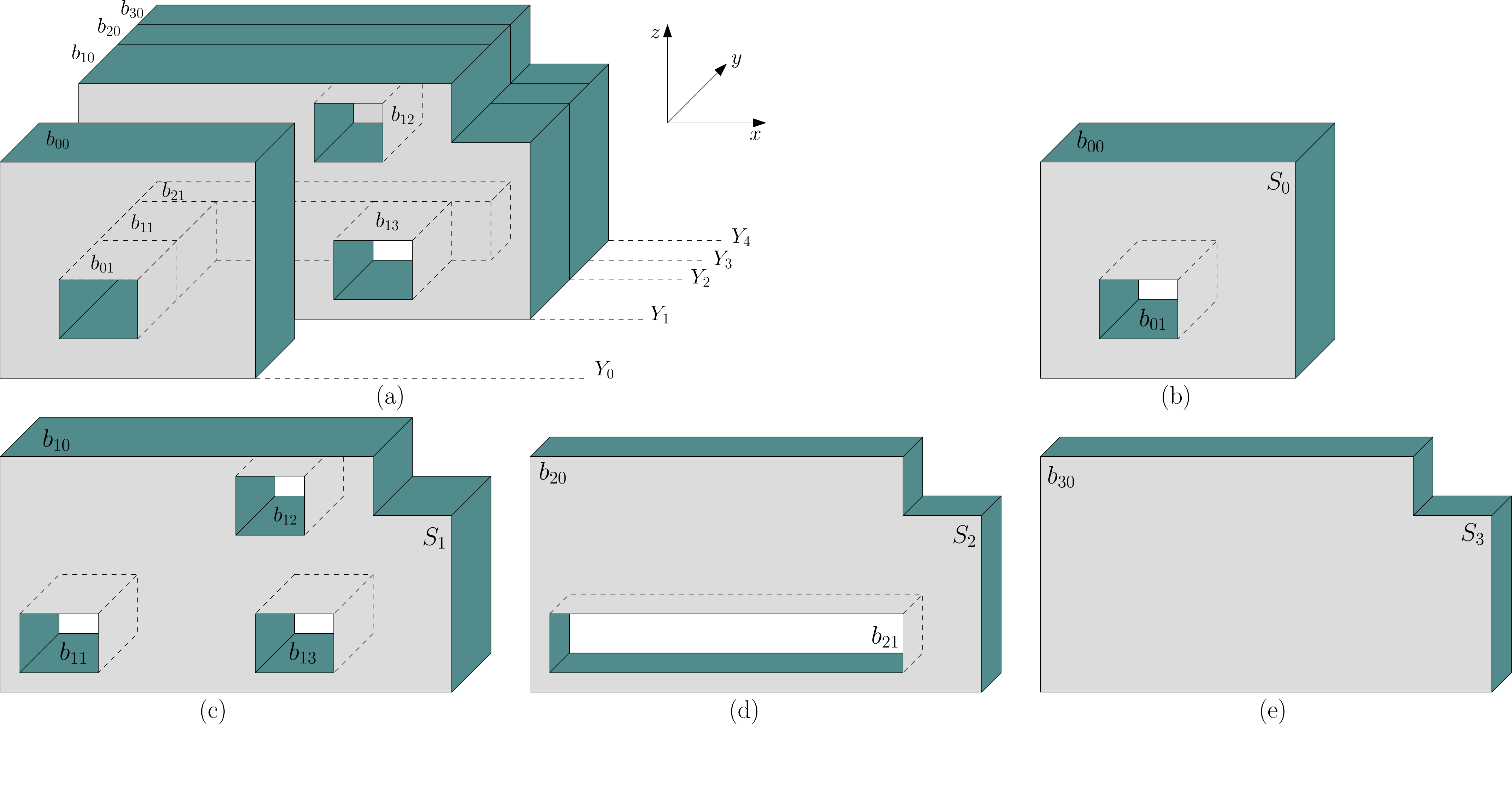}
\caption{A polyhedron of genus one.}
\label{fig:defs1}
\end{figure}

Imagine slicing $P$ with $y$-perpendicular planes through each vertex. Let $Y_0$, $Y_1$, $Y_2$, $\dots$
be the slicing planes sorted by $y$ coordinate. Each (solid) connected component of $P$ located 
between two consecutive planes $Y_i$ and $Y_{i+1}$ is called a \emph{slab}. 
For example, the polyhedron from~\autoref{fig:defs1}a has four slabs $S_0$, $S_1$, $S_2$ and $S_3$, 
which are depicted in~\autoref{fig:defs1}(b-e). 
Note that each slab is an extruded orthogonal polygon with zero or more orthogonal holes, extruded in the $y$-direction. 
The cycle of \{left, right, top, bottom\} faces surrounding either the entire slab or a hole in a slab is called a \emph{band}. 
Each slab has exactly one \emph{outer} band, and zero or more \emph{inner} bands. Referring to the example from~\autoref{fig:defs1}, 
the slab $S_0$ has outer band $b_{00}$ and inner band $b_{01}$; $S_1$ has outer band $b_{10}$ and inner bands $b_{11}$, $b_{12}$ and $b_{13}$; 
$S_2$ has outer band $b_{20}$ and inner band $b_{21}$; and $S_3$ has outer band $b_{30}$ and no inner bands.
Note that each band is associated with a unique slab. 
The intersection of a band with an adjacent plane $Y_i$ (and similarly in $Y_{i+1}$) is a cycle of edges called a \emph{rim} 
(so each band has exactly two rims). 

We say that a rim $r$ \emph{encloses} a face of $P$ if
the portion of the $Y$-plane interior to $r$  
is a face of $P$.  In other words, all points enclosed by $r$ 
in the $Y$-plane are also on the surface of $P$. 
If there are points of the $Y$-plane enclosed by $r$ that are not on the surface of $P$, then 
we say that $r$  does not enclose a face of $P$.  For example,
in~\autoref{fig:defs1} the rim of $b_{30}$ in plane $Y_4$ and the rim of $b_{12}$ in plane $Y_2$ each
enclose a
face of $P$, but the rim of $b_{00}$ in $Y_0$ does not. 

\section{Overview of Linear Unfolding}
\label{sec:genus0}
Throughout this section, $P$ is an orthogonal polyhedron of genus zero. 
We begin with an overview of the algorithm in~\cite{Chang2015} that unfolds $P$ using linear refinement. 
In~\autoref{sec:genus2} we will detail those aspects of the algorithm that we modify to handle orthogonal 
polyhedra of genus 1 and 2.

\subsection{Unfolding Extrusions}
\label{sec:extrusions}
Nearly all algorithmic issues in the linear unfolding algorithm from~\cite{Chang2015} are present in unfolding polyhedra that 
are vertical extrusions of simple orthogonal polygons.
Therefore, we describe their unfolding algorithm for this simple shape class first, 
before extending the ideas to all orthogonal polyhedra of genus zero.  

Before going into details, we briefly describe the algorithm
at a high level.  It begins by slicing $P$ into slabs using $y$-perpendicular
planes.  
For these vertical extrusions, all the slabs are boxes.
The adjacency graph of these boxes is a tree $T$. 
Each leaf node $b$ in $T$ has a
corresponding thin spiral surface path that includes a vertical segment running across the back face of $b$ (which must be a face of $P$) on the side opposite 
to $b$'s parent. The surface path extends from the bottom endpoint of this vertical segment by cycling around $b$'s band until it 
reaches the top endpoint, and from there it continues
along two strands that spiral side-by-side together on $P$, cycling 
around the bands
on the path in $T$ to the root node box where the two strands terminate.  
At the root box, the endpoints of all the pairs of strands are carefully stitched together into one surface path
 that  
 can be flattened in the plane as 
 a monotone staircase.  By thickening the surface path to cover
the band faces of $P$,
and attaching the $y$-perpendicular faces 
above and below it, the entire
surface of $P$ is flattened without overlap into the plane. Details
of the algorithm are provided in the sections that follow. 

\subsubsection{Unfolding Tree $T$}
\label{sec:extrusions-unftree}
Again we restrict our attention to the situation where $P$ is 
a vertical extrusion of a simple orthogonal polygon, and describe the algorithm in detail.
The unfolding algorithm begins by slicing $P$ with $Y_i$ planes passing through every vertex, as
described in~\autoref{sec:intro}. 
This induces a partition of $P$ into rectangular boxes. 
See~\autoref{fig:g0poly}a for an example.
The dual graph of this partition is a tree $T$ whose nodes correspond to 
bands, and whose edges connect pairs of adjacent bands. 
\autoref{fig:g0poly}b shows the tree $T$ for the example from~\autoref{fig:g0poly}a. 
We refer to $T$ as the {\em unfolding tree}, since it will guide the unfolding process. 
For simplicity, we will use the terms ``node'' and ``band'' interchangeably. 

\begin{figure}[htbp]
\centering
\includegraphics[width=0.95\linewidth]{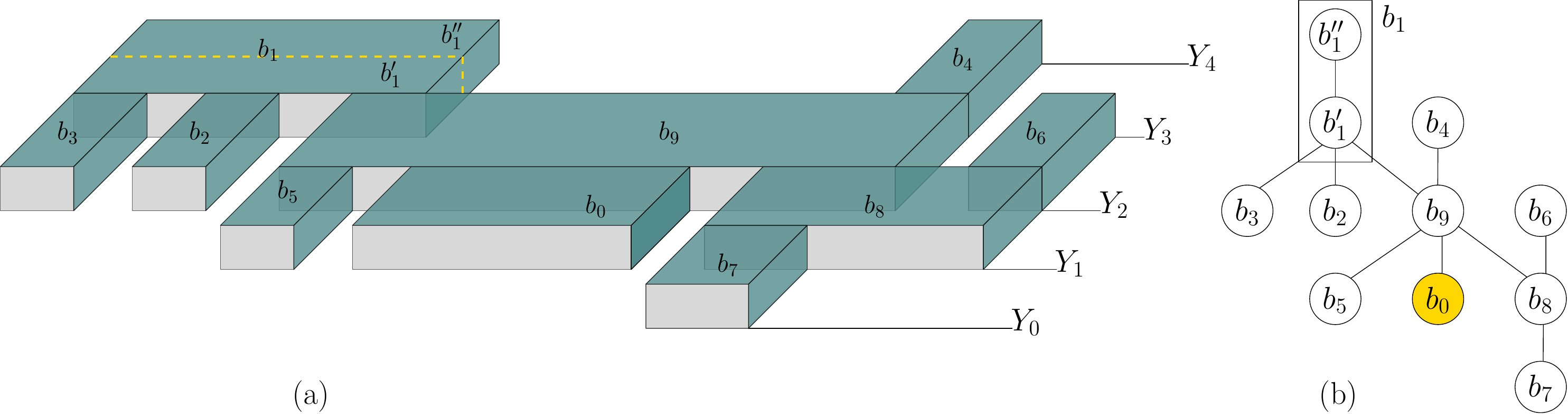}
\caption{(a) Genus-0 polyhedron partitioned into boxes (b) Unfolding tree $T$ rooted at $b_0$.} 
\label{fig:g0poly}
\end{figure}

Every tree of two or more nodes has at least two nodes of degree one,
so we designate the root of $T$ to be one of these degree-$1$ nodes. 
In~\autoref{fig:g0poly}b for example, we may choose $b_0$ as the root of 
$T$, although any of the other degree-$1$ nodes 
would serve as well. 
The rim of the root band that has no adjacent band 
is called its \emph{front} rim; the other rim 
is its {\em back} rim.  
For any other band $b$, the rim adjacent to $b$'s parent in $T$ is the {\em front} rim of $b$, 
and the other rim of $b$ is its {\em back} rim. Children attached to the front rim of their parent are 
\emph{front children}; children attached along the back rim of their parent are \emph{back children}.
Note that ``front'' and ``back'' modifiers for rims and children derive from the structure of
$T$, and are not related to the ``forward'' and ``rearward'' $\pm y$
directions. 

For reasons that will become clear later, we slightly alter the structure of $T$ to eliminate all non-leaf
nodes without back children. For each such internal band $b$, we perform a cut around its middle with a 
$y$-perpendicular plane. This partitions $b$ into two bands $b'$ and $b''$, with $b'$ at the front 
and $b''$ at the back of $b$. This change in the partition is mirrored in $T$ by replacing $b$ with $b'$, 
and adding $b''$ as a back leaf child of $b'$. In~\autoref{fig:g0poly}, node $b_1$ is replaced by $b'_1$ and $b''_1$.
Thus each non-leaf node in $T$ has at least one back child. 

\subsubsection{Leaf Node Unfolding}
\label{sec:leaf}


The unfolding of a leaf node $b$ is determined by a spiral surface path 
whose endpoints lie on a top rim segment shared by $b$
and its parent (necessarily on $b$'s front rim, by definition).
See~\autoref{fig:leaf}a where the 
endpoints are labeled $e_1$ and $e_2$.
Observe that the middle of the path consists of a vertical segment on $b$'s back face, shown in red 
on the exploded view of the back face in~\autoref{fig:leaf}a and circled in~\autoref{fig:leaf}b. 
We describe the spiral path as it extends out from the top and bottom
of this segment to connect up with the 
endpoints on the front rim. 
From the bottom of the segment, the path moves parallel to the $y$-axis on the bottom face and then cycles 
counterclockwise to 
the top face where it meets up with the top end of the vertical segment. From there, both ends of the path
cycle side-by-side together in a counterclockwise direction 
while displacing toward the front rim.
We refer to this spiral path as the \emph{connector path}, suggestive of its ability to connect two points ($e_1$ and $e_2$) that are
both located on the same rim. 
When unfolded and laid horizontally in the plane, this spiral forms a monotone staircase, as depicted in~\autoref{fig:leaf}b. 

\begin{figure}[htbp]
\centering
\includegraphics[width=0.8\linewidth]{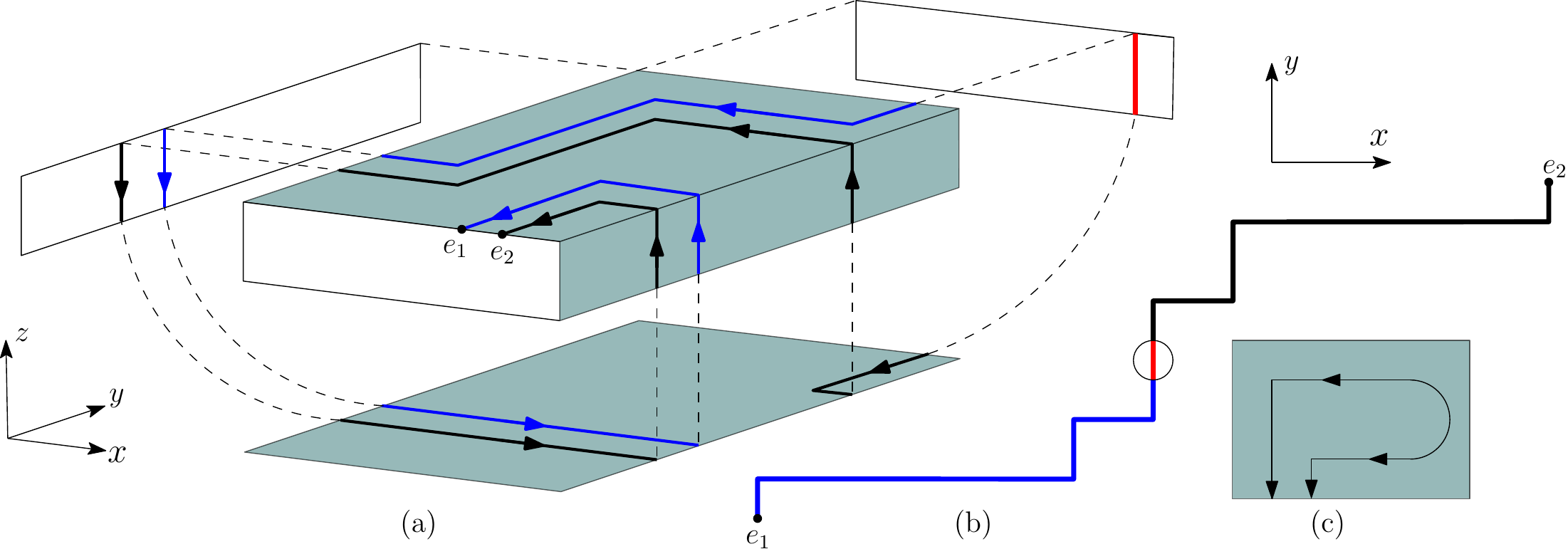}
\caption{Unfolding a leaf band in counterclockwise direction; arrows indicate the direction followed by the unfolding algorithm, starting at the 
back face vertical segment (shown in red in (a) and circled in (b) of this figure).} 
\label{fig:leaf}
\end{figure}

Three-dimensional illustrations, such as the one in~\autoref{fig:leaf}a, become impractical for more complex examples, so we will use instead  
the 2D representation depicted in~\autoref{fig:leaf}c. 
This 2D representation captures the counterclockwise direction of the blue and black portions of the path in~\autoref{fig:leaf}a,
viewed from $y=-\infty$ 
as they cycle side-by-side together
from the back
to the front rim; the arc symbolizes the vertical back face segment connecting them.

A crucial property required by the Chang and Yen's unfolding algorithm~\cite{Chang2015} and 
the algorithms from which that derives~\cite{Damian-Flatland-O'Rourke-2007-epsilon,Damian-Demaine-Flatland-2014-delta} is
that the back rim of each leaf band in $T$ encloses a face of $P$. This is necessary
because the connector paths use a thin strip 
from the back faces of the leaves. Although it is easy to verify this property for the
simple shape class of extrusions, it is not obvious for
arbitrary genus-0 orthogonal polyhedra, but was proven true in~\cite{Damian-Flatland-O'Rourke-2007-epsilon}. 

\subsubsection{Internal Node Unfolding}
\label{sec:internal}
Having established a spiral path for each leaf node, we then
extend these paths to the {\emph{internal}
nodes in $T$, where an
internal node is any non-leaf node other than the root. 
We process internal nodes of $T$ in order of increasing height of their corresponding subtrees. This guarantees that,
at the time an internal node $b$ is processed, its children in $T$ have already been processed.  We assume
inductively that having processed a front (back) child of $b$, the two endpoints of each spiral path originating at a leaf in the child's subtree are located side-by-side on the front (back) rim of $b$. The goal in processing $b$ is to extend these paths so that the pairs of endpoints
lie side-by-side on the top of $b$'s rim segment shared with $b$'s parent. 
In~\autoref{fig:leaf} for example, $b''_1$ would have already been processed at the 
time $b'_1$ is processed, 
and the paths need to be extended across $b'_1$ to the front
rim of $b'_1$ shared with its parent $b_9$.
The total number of spiral paths handled at $b$ is precisely the 
number of leaves in the subtree of $T$ rooted at $b$. 

Let $r$ be the top rim edge shared by $b$ with its parent in $T$. Refer to the band labeled $b$ and the edge labeled $r$ in~\autoref{fig:unf0} (which shows the unfolding for the example from~\autoref{fig:g0poly}). Let $\xi_1, \xi_2, \ldots,$ be the spiral paths corresponding to the back children of $b$, listed in the order in which they are encountered in a clockwise walk starting at the top left corner of $b$'s back rim). 
Our construction of $T$ guarantees that at least one such back spiral exists at each internal node in $T$. 

For each $i = 1, 2, \ldots$, we extend both ends of $\xi_i$ by tracing along both sides of an orthogonal path that makes one complete counterclockwise cycle around the top, left, bottom and right faces of $b$, while displacing toward the front of $b$, until it reaches $r$. The complete cycle around $b$ is important to ensure that the spiral can later be thickened to cover the entire surface of $b$ (hence the need for at least one back child at each internal node). 
For $i >1$, the orthogonal path corresponding to $\xi_i$ runs alongside the orthogonal path corresponding to $\xi_{i-1}$, to ensure that the spiral paths do not cross one another. (In~\autoref{fig:unf0}, the spiral paths are labeled in several places, to 
permit easy tracing. 
Because $\xi_i$ only moves 
parallel to the $y$ axis and spirals counterclockwise around $b$, it can be laid flat in the plane as a staircase monotone in the $x$-direction. 

\begin{figure}[htbp]
\centering
\includegraphics[width=0.8\linewidth]{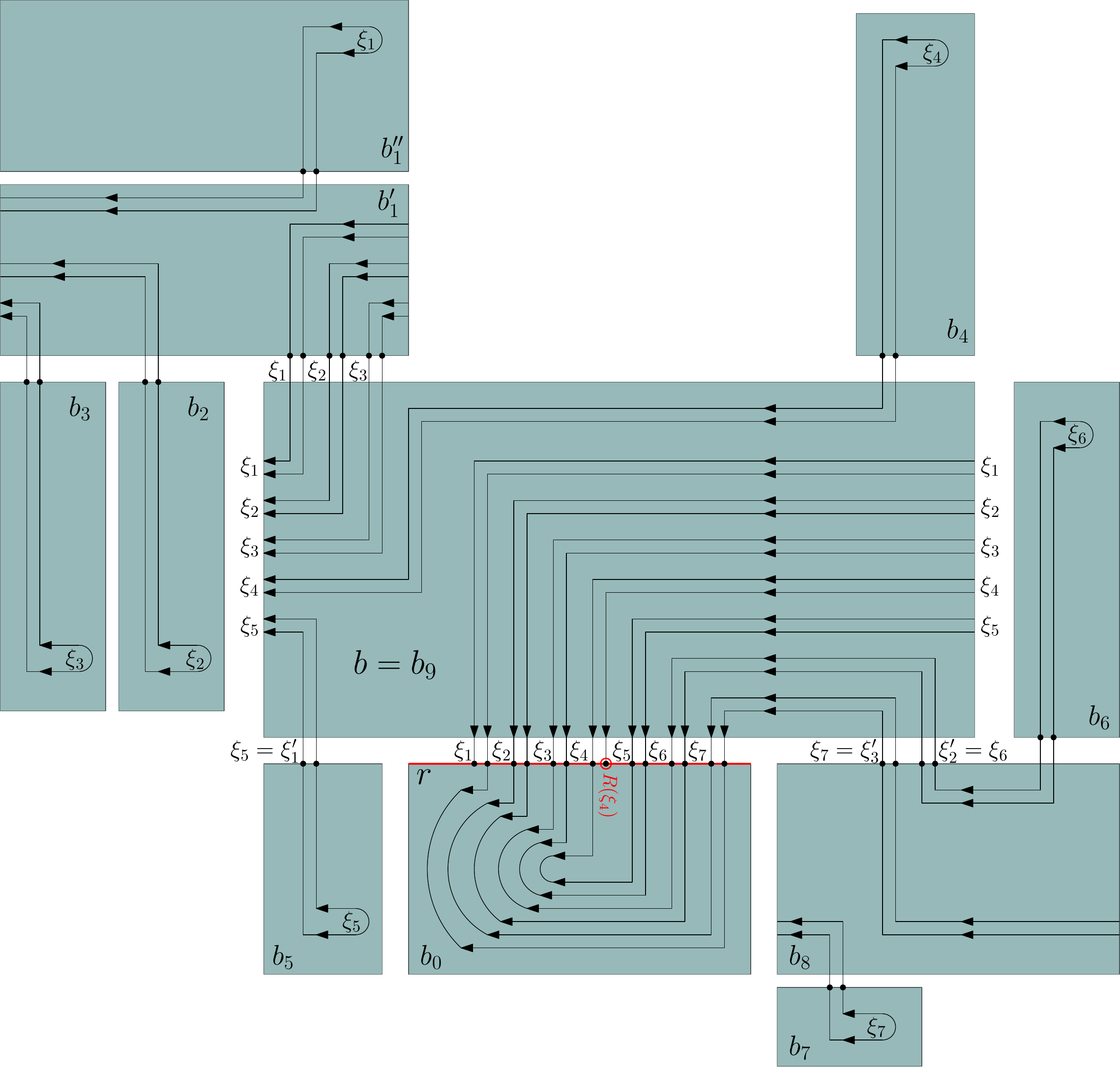}
\caption{Unfolding internal nodes and the root band ($b_0$) for the example from~\autoref{fig:g0poly}; arrows indicate the direction followed by the unfolding algorithm.}
\label{fig:unf0}
\end{figure}

We now turn to processing the spiral paths corresponding to the front children of $b$. 
Let $\xi'_1, \xi'_2, \ldots,$ be the front spiral paths encountered in this order in a counterclockwise walk around the front rim of $b$, starting at any point on the front rim of $b$. 
We extend both ends of each spiral $\xi'_i$ by tracing along both sides of an orthogonal path that displaces slightly toward the back of $b$, then proceeds counterclockwise and toward the front of $b$ until it meets $r$. Note that, if $\xi_i$ lies to the left of $r$, then it will need to cycle around the top, left, bottom and right faces of $b$, in order to meet $r$ (see spiral $\xi'_1 = \xi_5$ in~\autoref{fig:unf0}). Again, care must be taken to ensure that the orthogonal path corresponding to $\xi'_i$ does not cross any of the orthogonal paths corresponding to the other (front and back) spiral paths. 

\subsubsection{Root Node Unfolding}
\label{sec:root}
The last internal node of $T$ whose spiral paths are extended to
its parent in this fashion
is the (single) child $b$ of the root band.   As described, these spiral paths
cycle counterclockwise on $b$ to reach the top rim segment
$r$ on the front rim of $b$, which by definition is the back rim of the root band.  Let $\xi_1, \xi_2, \dots, \xi_\ell$
be the extended spiral paths listed in the order encountered in a clockwise walk around the front rim of $b$, 
starting at the top left corner of the root's back face. Here 
$\ell$ is the number of leaf nodes
in $T$.  (For example, $\ell = 7$ in the example from~\autoref{fig:unf0}.)
Let $L(\xi_i)$ be the left endpoint of spiral $\xi_i$ on $r$ and let $R(\xi_i)$ be the right endpoint. 
With this notation, the endpoints from left to right on $r$ are $L(\xi_1),
R(\xi_1), L(\xi_2), R(\xi_2), \dots, L(\xi_\ell), R(\xi_\ell)$.

The next step of the algorithm is to link these $\ell$ spiral paths into a single
path $\xi$ that can be flattened in the plane as a monotone staircase.  
The starting point of $\xi$ is $L(\xi_1)$, and the first part of $\xi$ 
consists of $\xi_1$ followed by $\xi_\ell$.  These two spiral paths
are linked via a connector path on the root band that extends
from endpoint $R(\xi_1)$ to endpoint $R(\xi_\ell)$. 
See the 2D representation of the connector path linking the 
right endpoint of $\xi_1$ to the right endpoint of $\xi_7$ in~\autoref{fig:unf0}. 
This connector path is analogous
to the connector paths followed at the leaf nodes, but here the vertical segment is on the front
face of the root band. 
Because the root node has degree one and its only child is adjacent on its back rim, 
the root's front rim encloses a face of $P$, and so it is possible for
the path to connect in this manner.
This connector path is depicted in~\autoref{fig:root}.
From $R(\xi_\ell)$, the connector path cycles counterclockwise
to reach $R(\xi_1)$. From there, both parts of the path (i.e., the extensions of $R(\xi_\ell)$ and $R(\xi_1$)) 
cycle counterclockwise together towards the front face.
The part of the path extending from $R(\xi_\ell)$ meets
the front face vertical segment at its top endpoint, while the 
other part of the path extending from $R(\xi_1)$
continues cycling to the bottom face and meets the
vertical segment at its bottom endpoint.
(Observe that this path is a mirror image of the one 
depicted in~\autoref{fig:leaf}.)
\begin{figure}[htbp]
\centering
\includegraphics[width=0.6\linewidth]{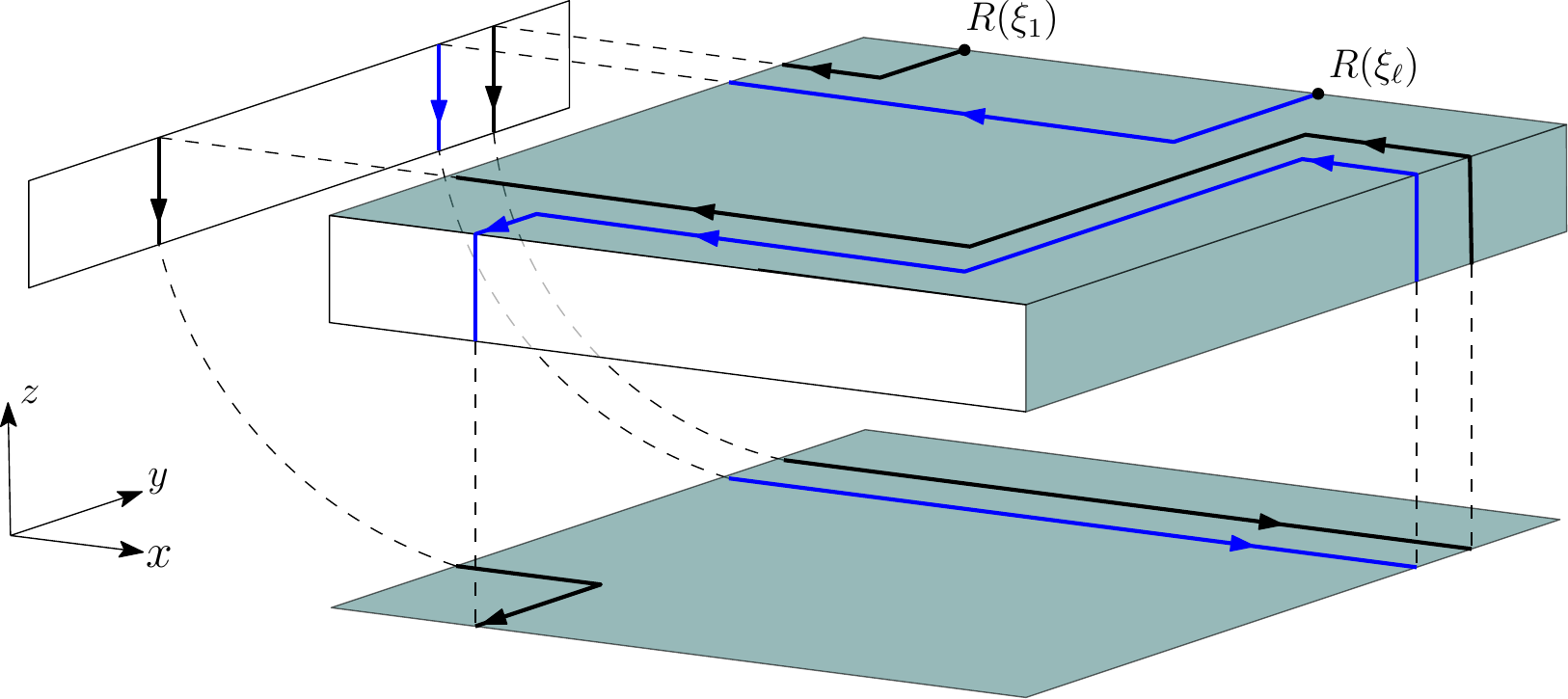}
\caption{The connector path for the spiral paths $\xi_1$ and $\xi_\ell$; arrows indicate the direction followed by the 
unfolding algorithm, up to the vertical segment on the front face. 
} 
\label{fig:root}
\end{figure}
Like a leaf node connector path, 
this path can be laid flat 
in the plane as a monotone staircase. Because the counterclockwise cycling
direction of the connector path is consistent with that of $\xi_1$ and
$\xi_\ell$, $\xi_1$ can be laid flat
on one side of the path
and $\xi_\ell$ can be laid flat on the other side, thus forming
a single monotone staircase. In this way
we link $\xi_1$ and $\xi_\ell$ to form the first part of $\xi$.

Continuing to link the spiral paths to form $\xi$, $L(\xi_\ell)$ is linked  
to endpoint $L(\xi_2)$ using a connector path that runs alongside the previous connector path.  Similarly, this connector path flattens to a monotone 
staircase and connects the two flattened staircases $\xi_\ell$ and $\xi_2$. 
The remaining endpoints are paired up similarly and linked with
connector paths. Specifically, $R(\xi_i)$ is linked to $R(\xi_{\ell - i + 1})$
and  $L(\xi_i)$ is linked to $L(\xi_{\ell - i + 2})$), for $i = 2, 3, \dots$ until one unpaired endpoint remains: $L(\xi_{\ell / 2 + 1 })$ (if $\ell$ is even)
or $R(\xi_{(\ell+1)/2})$ (if $\ell$ is odd). 
This endpoint is where $\xi$ terminates. See $R(\xi_4)$ in~\autoref{fig:unf0}. 
 
\subsubsection{Completing the Unfolding}
To complete the unfolding of $P$, the spiral $\xi$ is thickened in the $+y$ and $-y$ direction so that it completely covers
each band. This results in a thicker strip, which can be unfolded as a staircase in the plane. Then the forward and rearward faces of $P$ are 
partitioned by imagining the band's top rim edges illuminating downward light rays in these faces. The illuminated 
pieces are then ``hung" above and below the thickened staircase, along the corresponding illuminating rim 
segments that lie along the horizontal edges of the staircase.

\subsection{Unfolding Genus-0 Orthogonal Polyhedra}
\label{sec:unf0}
The unfolding algorithm described in~\autoref{sec:extrusions} for extrusions generalizes to 
all genus-0 orthogonal polyhedra as described in~\cite{Chang2015}, 
so we briefly present the main ideas here and refer the reader to~\cite{Chang2015} for details. 

Instead of partitioning $P$ into boxes, the unfolding algorithm
partitions $P$ into slabs as defined in Section~\ref{sec:intro}.
It then creates an unfolding tree $T$ in which each node corresponds to either an outer band (surrounding a slab)
or an inner band (surrounding a hole). 
Each edge in $T$ corresponds to a $z$-beam, which is a thin vertical rectangular strip from a frontward or 
rearward face of $P$ connecting a parent's rim to a child's rim. Note that a $z$-beam may have zero geometric height, 
when two rims share a common segment. 
The spiral paths connect vertically along 
the $z$-beams when transitioning from a child band to its parent. 
For a parent band $b$, its front (back) children are those
whose $z$-beams connect to $b$'s front (back) rim.
It was established in~\cite{Damian-Flatland-O'Rourke-2007-epsilon} that the back 
rim of each leaf node in $T$ encloses a face of $P$.  

\subsubsection{Assigning Unfolding Directions}
\label{sec:dir}
Unlike the case of protrusions, where all bands are unfolded in the same direction (i.e., either all counterclockwise or all clockwise), 
general genus-0 orthogonal polyhedra may require different unfolding directions for different bands. For example, if a $z$-beam 
is incident to a top rim edge of the parent and a bottom rim edge of the child, then the unfolding direction (viewed from $y=-\infty$) changes when 
transitioning from the child to the parent.~\autoref{fig:g0dir}a shows such an example.

\begin{figure}[htbp]
\centering
\includegraphics[width=0.8\linewidth]{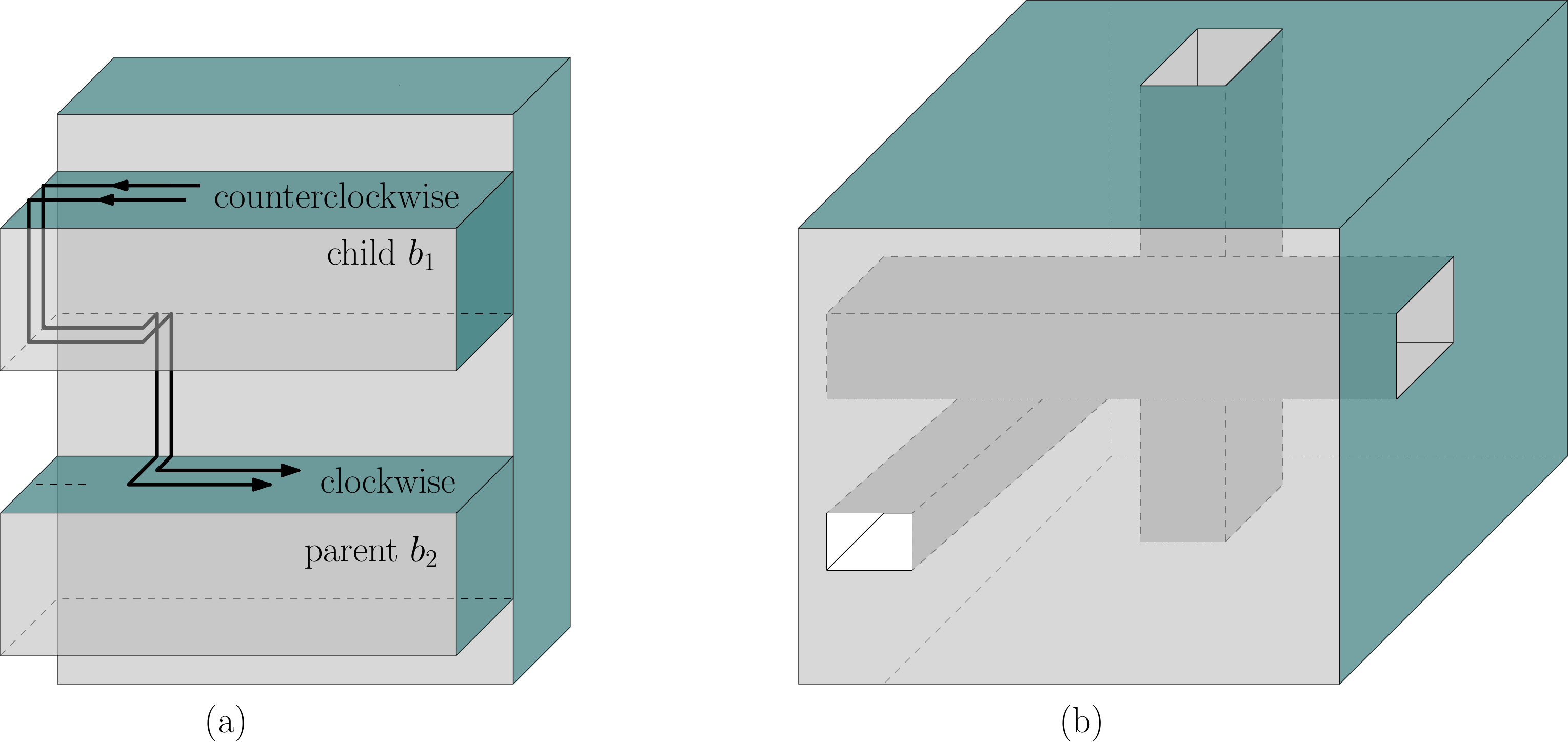}
\caption{(a) The unfolding direction changes when transitioning from child $b_1$ to parent $b_2$ 
(b) \autoref{lem:root}: a polyhedron of genus $g = 3$ with no slicing direction that yields a face-node.} 
\label{fig:g0dir}
\end{figure}

We assign unfolding directions for each band in a preorder traversal of the unfolding tree $T$. 
Set the unfolding direction for the root band to counterclockwise. At each band node $b$ visited in a preorder traversal of $T$, 
if the edge in $T$ connecting $b$ to its parent corresponds to a $z$-beam incident to both top and bottom rim points, then set 
the unfolding direction for $b$ to be opposite to the one for the parent (i.e., if the unfolding direction for the parent is counterclockwise, 
then the unfolding direction for $b$ will be clockwise, and vice versa.)
Otherwise, the endpoints of the $z$-beam connecting 
$b$ to its parent are both top rim points or both bottom rim points,  and in that case $b$ inherits the unfolding direction of its parent.
(Recall that a $z$-beam can have zero geometric height where two rims overlap.)

\subsubsection{The Unfolding Procedure}
The  unfolding of a leaf band $b$ 
follows the description 
in~\autoref{sec:leaf} (Figure~\ref{fig:unf0}), except that the unfolding proceeds in the direction
$d$ assigned to $b$ (as described in~\autoref{sec:dir}), and the spiral path may cycle
around multiple band faces instead of just four.  When the two endpoints reach
the $z$-beam on $b$'s front rim, they 
track vertically along the $z$-beam, stopping side-by-side on the rim of $b$'s parent. 

At each internal node $b$ in $T$, 
the unfolding 
proceeds as described in~\autoref{sec:internal}. 
Observe that there is a natural cyclic ordering of $b$'s front (back) children that is determined by their
$z$-beam connections around $b$'s front (back) rim, which guides the order in which we process $b$'s children. 
Once the pairs of endpoints reach the
$z$-beam connection to $b$'s parent on $b$'s front rim, they move vertically along the $z$-beam, stopping on the rim of $b$'s parent.
At the root node, these strips 
are glued together as described in~\autoref{sec:root} (with the notion of ``left'' and ``right'' altered to match  
the cyclic ordering of the children, so that $L(\xi)$ and $R(\xi)$ are always encountered in this order in a clockwise walk along 
the rim).
In addition, the spiral paths followed on inner bands
are the same as those described previously for outer bands.
For example, assume that the inner band $b_{12}$ that forms a dent in the example from~\autoref{fig:defs1}
is a leaf band in $T$. 
Note that the interior of $P$ surrounds $b_{12}$ on all sides, except for the front
which is the entrance to the dent.
Then the connector path is the same as in ~\autoref{fig:leaf}a but is now viewed as
cycling on the surface of $P$ inside the dent. 

In the unfolded staircase, the portion
of $\xi$ on a $z$-beam corresponds to a vertical riser. Thickening $\xi$ is 
proceeds as in the case of 
extrusions. The partitioning of the
forward and rearward faces also follows 
the case of extrusions, but
in addition to shooting illuminating rays down from top rim edges, bottom rim edges
also shoot rays downward to illuminate portions of faces not illuminated by
the top edges. 
The face pieces resulting from this partitioning method
are hung above and below the staircase, as in the case of extrusions,
as described in~\cite{Damian-Flatland-O'Rourke-2007-epsilon}. 

\section{Unfolding Genus-2 Orthogonal Polyhedra}
\label{sec:genus2}
The unfolding algorithm described in~\autoref{sec:genus0} depends on two key properties
of $P$ that are not necessarily true if $P$ 
has genus $1$ or $2$.
First, it requires the existence of a band with
a rim enclosing 
a face of $P$ that can serve as the root node of $T$. And second, it requires
the back rim of each leaf node in $T$ to enclose 
a face of $P$. These two requirements are needed so that the connector paths can 
use vertical strips on the enclosed faces in the unfolding.
As a simple example of a genus-1 polyhedron for which neither property
holds, 
consider the case when $P$ is a box with a $y$-parallel
hole through its middle. Slicing $P$ with $y$-perpendicular planes
results in a single slab having one outer and one inner band.
In this case, no rim encloses a face of $P$.  If we rotate
$P$ so that the hole is parallel to the 
$x$ axis instead, slicing produces four bands, and the band surrounding the
frontward (or rearward) box could
serve as the root node.  But every 
unfolding tree for the four bands contains a leaf whose back rim doesn't enclose a face of $P$. 

In this section, we first show that there always exists 
an orientation for $P$ such that at least
one band has a rim enclosing a face of $P$ that can be used for the root of $T$.  
Then we describe an algorithm that computes 
 an unfolding tree for which we can prove that the number of 
leaf bands whose back rims do not enclose a face
of $P$ is at most $g$,  where $g$ is the genus of $P$. 
Finally, we describe changes to the unfolding algorithm
that allow it to handle up to $g$ leaves that don't enclose a face of $P$, for $g \leq 2$.

\subsection{The Rim Unfolding Tree $T_r$}
\label{sec:tree2}
In order to establish these new results, we need finer-grained structures
than the band-based $G$ and $T$, which we call $G_r$ and $T_r$,
both of which are rim-based.
We define the \emph{rim graph} $G_r$ for $P$ in the following way.
For each band $b$ of $P$, add two nodes $r_b$ and $r'_b$ to $G_r$
corresponding to each of $b$'s rims.
Add an edge connecting $r_b$ and $r'_b$ and call it a \emph{band edge}, or a $b$-\emph{edge} for short. 
For each pair of rims that can be connected by a $z$-beam, add an edge between them
in $G_r$, and call it a $z$-\emph{beam edge}, or a $z$-\emph{edge} for short. 
When referring to $G_r$, we will use the terms node and rim interchangably.
For any simple \emph{cycle} $C$ in $G_r$, we distinguish between \emph{$b$-\emph{nodes}} of $C$, 
which are endpoints of $b$-edges in $C$, and $z$-nodes of $C$, incident to two adjacent 
$z$-edges in $C$.
For any subgraph $J \subseteq G_r$, we use $V(J)$ and $E(J)$ to denote the set of nodes  
and the set of edges in $J$, respectively. 

Call a rim of $G_r$ that encloses a face of $P$ 
a \emph{face-node}.
A \emph{nonface-node} in $G_r$ is 
a node whose rim does not enclose a face of $P$.

\begin{proposition}
A rim $r$ is a face-node of $G_r$ 
if and only if every $z$-beam extending from a horizontal edge of $r$ and going up or down on the surface of $P$ hits $r$. 
A face-node of $G_r$ is necessarily of degree one. 
\label{prop:enclosedFace}
\end{proposition}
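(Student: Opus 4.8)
The plan is to prove both statements by unpacking the definition of "encloses a face." Recall that a rim $r$ lies in some slicing plane $Y_i$ and bounds a region of that plane; by definition $r$ is a face-node exactly when every point of $Y_i$ enclosed by $r$ lies on the surface of $P$, i.e. the planar region bounded by $r$ is itself a face of $P$. So the task is to translate "the interior region of $r$ is part of the surface" into the stated condition about $z$-beams.

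First I would prove the forward direction of the "if and only if." Suppose $r$ is a face-node, so the planar region $D$ bounded by $r$ in $Y_i$ is a face of $P$. Take any horizontal edge $h$ of $r$ and the $z$-beam that leaves $h$ going up (or down) on the surface of $P$. Since $D$ is a (frontward or rearward) face of $P$, immediately on one side of $h$ along $Y_i$ we are inside $D$, hence on the surface; the $z$-beam, which runs on a band face perpendicular to $Y_i$ and emanates from $h$, must therefore bound $D$ on that side, and following it around we claim it returns to $r$. The key observation is that the union of band faces reachable from $r$ by such $z$-beams, together with $D$, forms a closed piece of surface; since $D$ is a single face bounded exactly by $r$, each $z$-beam extending from a horizontal edge of $r$ must terminate on another horizontal edge of $r$ (it cannot run off to some other rim, or else $D$ would not be a face bounded solely by $r$). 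That gives the "only if."

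For the converse, assume every $z$-beam extending from a horizontal edge of $r$ and going up or down on $P$ hits $r$ again. I want to conclude that the planar region $D$ bounded by $r$ in $Y_i$ is entirely surface, i.e. a face of $P$. The argument is by contradiction: if $D$ contained a point not on the surface of $P$, then $D$ would contain a maximal sub-region that is a hole passing through $Y_i$ locally, and the boundary of that hole region within $Y_i$ is a rim $r''$ strictly inside $r$ (an inner rim of the same slab, or of an adjacent slab across $Y_i$). A horizontal edge of $r''$ then has a $z$-beam that, rather than meeting $r$, bounds the non-surface region — but by a connectivity/winding argument one can trace a $z$-beam from a horizontal edge of $r$ that is "blocked" by this inner structure and fails to return to $r$, contradicting the hypothesis. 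I expect this direction to be the main obstacle, since it requires a careful argument that the interior region of $r$ has no "puncture": the cleanest route is probably to argue about the manifold surface directly — walk along the solid just on the $+y$ (or $-y$) side of $Y_i$ near $r$ and show that the hypothesis forces the surface locally adjacent to all of $r$ to be the single flat face $D$.

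Finally, the degree-one claim follows from the first part almost immediately. In $G_r$ every rim $r$ has exactly one $b$-edge. If $r$ were also incident to a $z$-edge, that $z$-edge records a $z$-beam from $r$ to a \emph{different} rim $r' \neq r$; such a $z$-beam rises or descends from a horizontal edge of $r$ on the surface of $P$ and lands on $r'$, so it does not "hit $r$," and by the equivalence just proved $r$ is not a face-node. Contrapositively, a face-node has no incident $z$-edges, hence degree exactly one (its unique $b$-edge).
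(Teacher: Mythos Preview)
The paper does not supply a proof of this proposition at all; it is stated and then immediately used, evidently regarded as a direct consequence of the definitions of ``encloses a face,'' ``$z$-beam,'' and ``$z$-edge.'' So there is no paper argument to compare against, and your proposal is in effect filling in what the authors left implicit.

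Your overall structure is correct and matches what the authors presumably have in mind. The forward implication and the degree-one claim are fine as you have them: if the planar region $D$ bounded by $r$ is a single face of $P$, any $z$-beam from a horizontal edge of $r$ that enters $D$ must exit through $\partial D = r$; and a $z$-edge incident to $r$ is by definition a $z$-beam to a \emph{different} rim, which by the equivalence forces $r$ to be a nonface-node, so a face-node has only its $b$-edge.

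The one place your sketch is soft is the converse, and you flag this yourself. The cleanest way to finish it is not a ``connectivity/winding'' argument but an extremal one: if $D$ is not entirely surface, then the surface portion of $D$ adjacent to $r$ is bounded by $r$ together with at least one other rim. Among all horizontal edges of those other rims, take one $e'$ of greatest $z$-coordinate, belonging to some $r' \neq r$. The $z$-beam going upward from $e'$ lies on the surface and cannot hit any interior rim (by maximality of $e'$), so it hits a horizontal edge of $r$; read in the other direction, this is a $z$-beam from a horizontal edge of $r$ that hits $r' \neq r$, contradicting the hypothesis. This replaces the vague ``blocked'' step with a two-line argument and is likely what the authors would write if pressed.
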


\begin{lemma}
If polyhedron $P$ has genus $g \le 2$, 
then there is a direction for slicing $P$ such that
$G_r$ includes a face-node $r_F$.
\label{lem:root}
\end{lemma}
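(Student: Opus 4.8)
\textbf{The plan.} The statement asserts the existence of at least one good slicing direction among the three coordinate axes, so I would proceed by contradiction: suppose that for each of the three axis directions the resulting rim graph $G_r$ has no face-node. By \autoref{prop:enclosedFace}, a rim $r$ fails to be a face-node precisely when some $z$-beam shot up or down from a horizontal edge of $r$ leaves the surface of $P$, i.e., the planar region enclosed by $r$ in its slicing plane contains an interior point not on the surface. The key observation is that such a "missing" interior point witnesses that a hole of $P$ (in the topological sense contributing to the genus) passes through that slicing plane in that direction. I would make this precise by associating, to each of the $g$ handles of $P$, a well-defined "axis" — roughly, the coordinate direction in which the handle's tunnel runs — and then count.

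\textbf{Key steps, in order.} First I would set up the correspondence between nonface-nodes and holes. For a fixed slicing direction (say $y$), every slab is an extruded orthogonal polygon with some number of orthogonal holes; a rim $r$ of a band in that slab is a nonface-node exactly when the slab-cross-section immediately on the appropriate side of $r$'s plane has a hole enclosed by $r$ (equivalently, $r$ bounds, together with the slab geometry, a cavity of the complement of $P$ that is not "capped" by a face). Second, I would argue that because $P$ has genus $g\le 2$, the total "amount of holeness" is bounded: each handle, when sliced perpendicular to a given axis, either contributes a genuine tunnel (so that some rim in that slicing fails to enclose a face) or is "aligned" so that slicing along that axis does not cut through it — but a single handle cannot be simultaneously tunnel-aligned to evade being detected in all three axis directions, since its tunnel runs (topologically) in essentially one principal direction. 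Third, assuming no face-node exists for any of the three axes, I would obtain, for each axis, at least one handle "responsible" for the failure; a careful accounting — using that $G_r$ in a given direction is a tree-like structure whose leaves correspond to the extremal slabs, and that an extremal slab's outer band always encloses a face unless a hole pierces it — forces at least $g+1$ distinct handle-contributions across the three directions when $g\le 2$, i.e., more handles than $P$ has. (Concretely: if $g=0$ the classical result already gives a face-node; if $g=1$ the one handle can block at most two of the three axes, leaving one good direction; if $g=2$ one shows the two handles together cannot block all three axes because blocking a direction "costs" a handle whose tunnel is transverse to that axis, and two handles have at most two transverse-tunnel directions among $\{x,y,z\}$, hence at least one axis is unblocked.) This contradiction proves the lemma. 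Finally, I would note that \autoref{fig:g0dir}b shows the bound is tight: at genus $3$ one can arrange three mutually orthogonal tunnels so that every axis direction is blocked, which is why the lemma is stated only for $g\le 2$.

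\textbf{Where the difficulty lies.} The routine part is \autoref{prop:enclosedFace} and the observation that an extremal slab's outer band encloses a face unless pierced by a hole; the real obstacle is making rigorous the claim that "one handle blocks at most one axis direction," or more precisely that the number of axis directions blocked by the holes of $P$ is at most $g$. This requires a genuinely topological argument — e.g., assigning to each slicing direction $d$ in which $G_r$ has no face-node a nontrivial cycle in $H_1(P)$ (or in $H_1$ of the complement of $P$) that is "supported" in direction $d$, and then showing these cycles are linearly independent in homology, so there can be at most $g$ of them. I expect to spend most of the proof establishing this independence, handling the interplay between outer bands and inner bands carefully (an inner band's rim can be a nonface-node for reasons involving a hole other than the "expected" one), and checking that the three coordinate directions really do force independent homology classes rather than merely distinct holes. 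The case analysis for $g=2$ is then a short finish once the counting lemma is in hand.
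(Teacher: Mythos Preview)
Your outline has a genuine gap at exactly the point you identify as difficult, and moreover your informal case analysis is internally inconsistent: for $g=1$ you assert a single handle ``can block at most two of the three axes,'' while for $g=2$ you assert that blocking a direction costs one handle (so two handles block at most two axes). If a single handle really could block two axes, two handles could in principle block all three, and the $g=2$ case collapses. The homological argument you sketch to repair this---assigning to each blocked direction an independent class in $H_1$---is not carried out, and it is not clear it works as stated: the obstruction on an extreme face need not be a through-hole at all but may be a \emph{cave} (a dent whose rim separates the surface), and a genus-$0$ cave contributes nothing to $H_1$ while still making the extreme face non-simply-connected. You never address caves, and ``an extremal slab's outer band always encloses a face unless a hole pierces it'' is simply false because of them.

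The paper's proof takes a quite different, purely combinatorial route and avoids homology entirely. It looks not at three directions but at the (at least) \emph{six extreme faces} of $P$---those flush with the bounding box. If any one is simply connected, slicing parallel to it immediately yields a face-node. Otherwise each extreme face contains an inner band, which the paper classifies as either a \emph{hole-band} (its rim does not separate the surface) or a \emph{cave-band} (it does). Genus-$0$ caves are handled by recursion: the cave, with its mouth sealed, is itself a genus-$0$ orthogonal polyhedron, and the same argument applied to \emph{its} extreme faces eventually produces a simply connected face of $P$. After this reduction, every extreme face contains a hole-band or a genus-$\ge 1$ cave-band; letting $h$ and $c$ count these, each genus-$\ge 1$ cave contributes at least $1$ to $g$ and each hole contributes at least $1/2$ (since its two ends may lie in two extreme faces), giving $c + h/2 \le g \le 2$. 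But there are at least six extreme faces, so $c + h \ge 6$, and these two inequalities have no non-negative integer solutions. This extreme-face counting with the hole/cave dichotomy is the idea you are missing; your per-direction homological bookkeeping would have to reinvent it to handle caves correctly.
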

\begin{proof}
Define the \emph{extreme faces} of $P$ as those faces flush with the smallest bounding box
enclosing $P$.
There must be at least one extreme face in each of the six directions $d \in \{\pm x, \pm y, \pm z\}$.
If any extreme face $F$, say in direction $d$, is simply connected, then slicing $P$
with a $d$-plane (parallel to $F$) just adjacent to $F$ will create a band $b$ one of whose rims
$r_F$ encloses $F$. Thus, slicing $P$ with $d$-planes will result in $G_r$ including the face-node $r_F$,
and the lemma is established.

Assume henceforth that each of the at least six extreme faces of $P$ is not simply connected.
So each extreme face $F$ includes at least one inner band $b$.
We now classify these bands $b$ into two types.

Let $r$ be the rim of an inner band $b$ in face $F$.
If cutting along $r$ separates the surface of $P$ into two pieces,
$P'$ which includes $b$, and the remainder $P \setminus P'$, then
we say that $b$ is a \emph{cave-band}.
Let $M$ be the ``mouth'' of the cave: the portion of the $Y$-plane enclosed by the rim $r$.
$P'$ is a ``cave'' in the sense that an exterior path that enters through $M$ can only exit
$P'$ back through $M$ again.
A band $b$ that is not a cave-band is a \emph{hole-band}.
These have the property that there is an exterior topological circle that passes through 
the mouth $M$ once, and so exits $P'$ elsewhere.

Let $P'$ be a cave with mouth $M$.
$P' \cup M$ is an orthogonal polyhedron $P'_M$, inverting
what was exterior to $P$ to become interior to $P'_M$.
Say that cave $P'$ has genus $0$ if $P'_M$ has genus $0$.
We now claim that the lemma is satisfied if $P$ has a genus-$0$ cave.
For we may apply the same procedure to $P'_M$: 
Examine its extreme faces (one of which is $M$).
If any extreme face (other than $M$) is simply connected, we are finished.
Otherwise, each extreme face includes an inner band $b'$.
It cannot be the case that $b'$ is a hole band, for then $P'_M$ has
genus greater than $1$.
Moreover, $b'$ cannot be a cave band for a cave of genus greater than $1$.
For in both cases, we could cut a cycle on the surface of $P'$ that would not disconnect 
$P'$. So $b'$ must determine a genus-$0$ cave band, and the argument repeats.
Eventually we reach a simply connected extreme face.

Now we have reduced to the situation that each of $P$'s six or more
extreme faces contains either a hole-band, or a genus-$(\ge 1)$ cave band.
Let the number of these bands be $h$ and $c$ respectively.
We now account for the  the genus $g$ of $P$, and the number of extreme faces.
Each cave of genus-$(\ge 1)$ contributes at least $1$ to $g$.
A hole-band in an extreme face could exit through that same face, 
or exit through a different extreme face,
or exit through a non-extreme face.
In the first two cases, two hole-bands contribute $1$ to $g$;
in the third case, one hole band contributes $1$ to $g$.
So $h$ hole-bands contribute at least $h/2$ to the genus,
and we have the inequality $c + h/2 \le g \le 2$.

We have defined $h$ and $c$ to be the number of such bands in extreme faces,
and we know that each of the at least six extreme faces must have one or more
hole- or genus-$(\ge 1)$ cave-bands. So we must have $c + h \ge 6$.
But these two inequalities have no solutions in non-negative integers. 
{\hfill\qed} 
\end{proof}


\noindent
\autoref{fig:g0dir}b shows that \autoref{lem:root} is tight.
Henceforth we assume $P$ is oriented so that the  direction
guaranteed by the lemma slices $P$ with $Y$-planes, and so $G_r$ has
a face-node $r_F$. This node will become the root of the unfolding tree.

\begin{figure}[htpb]
\centering
\includegraphics[width=\linewidth]{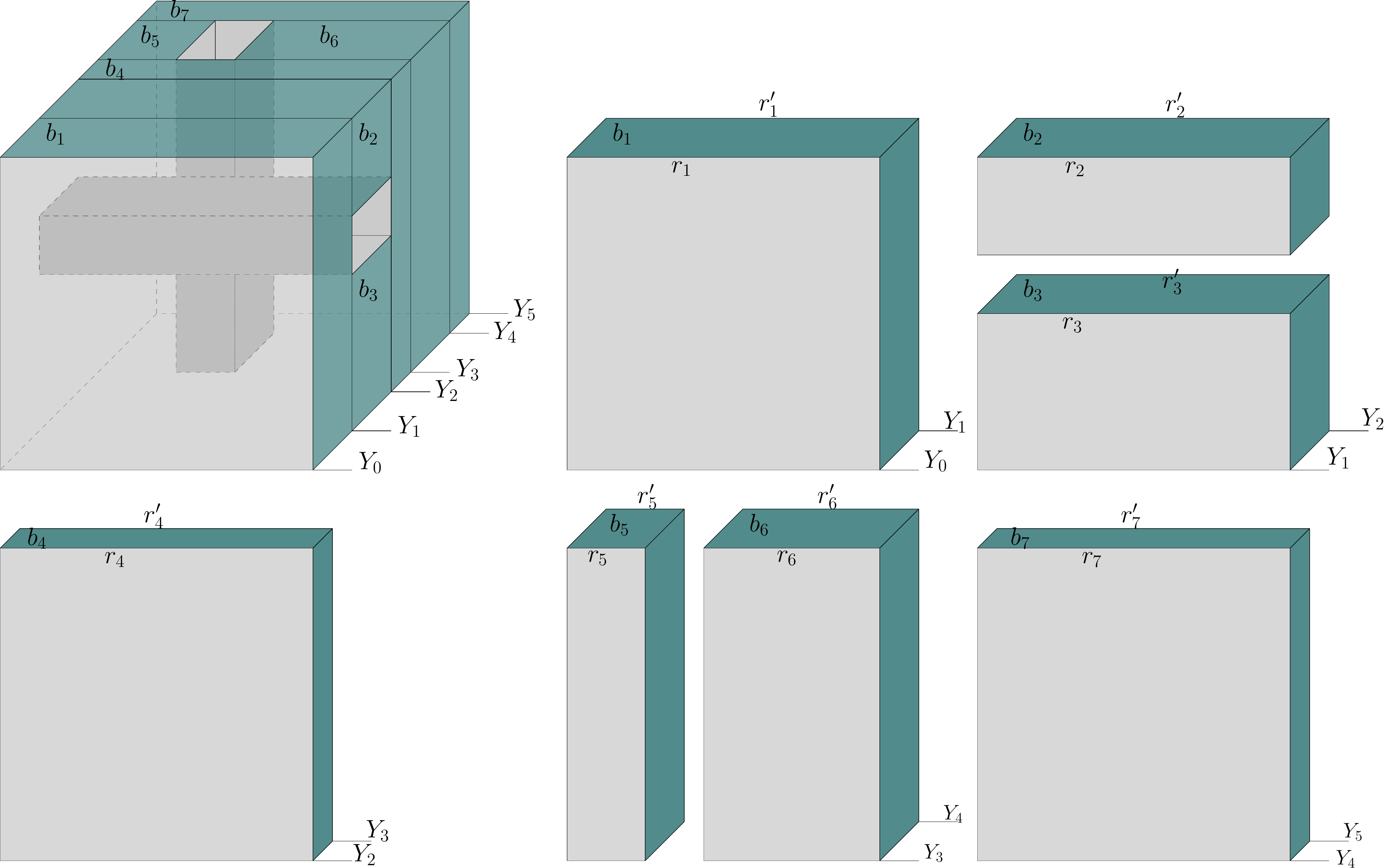}
\caption{Genus-2 polyhedron partitioned into slabs.} 
\label{fig:g2poly}
\end{figure}

\noindent
\autoref{fig:g2poly} shows an example of a genus-2 polyhedron sliced with $Y$-parallel planes in the direction 
identified by~\autoref{lem:root}, which yields two face-nodes $b_1$ and $b_7$.
The rim graph $G_r$ for this polyhedron is depicted (ahead)  
in~\autoref{fig:g2tree}b, with $r_1$ selected as root.

\begin{lemma}
$G_r$ is connected and contains no nonface-leaf nodes. 
\label{lem:terminal}
\end{lemma}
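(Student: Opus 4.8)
# Proof Proposal for Lemma~\ref{lem:terminal}

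The plan is to prove the two assertions separately: connectivity of $G_r$, and the absence of nonface-leaf nodes.

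\textbf{Connectivity.} First I would argue that $G_r$ is connected by reducing to the connectivity of the underlying band-adjacency structure. Since the surface of $P$ is a connected $2$-manifold, and slicing by the $Y$-planes partitions it into bands and $y$-perpendicular face pieces, any two bands are joined by a sequence of $z$-beams (each $z$-beam being a strip of a frontward/rearward face connecting two rims). Concretely, I would take two arbitrary nodes $r, r'$ of $G_r$; each is a rim of some band $b$, $b'$. Within a single band, its two rims are joined by the $b$-edge, so it suffices to connect $b$ to $b'$. Walking along the surface of $P$ from a point on $b$ to a point on $b'$, the walk alternately crosses bands and $y$-perpendicular face pieces; each crossing of a $y$-perpendicular piece corresponds to moving between two rims that can be joined by a $z$-beam, hence by a $z$-edge in $G_r$. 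Stitching these together yields a path in $G_r$ from $r$ to $r'$.

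\textbf{No nonface-leaf nodes.} This is the substantive part. Suppose for contradiction that $r$ is a leaf node of $G_r$ that is a nonface-node. Being a leaf, $r$ has degree one in $G_r$. Its unique incident edge is either the $b$-edge of its band $b$, or a $z$-edge. But every rim is incident to its own $b$-edge, so the unique incident edge must be exactly that $b$-edge, which forces $r$ to have \emph{no} incident $z$-edge at all — i.e., no $z$-beam extends from $r$ going up or down on the surface of $P$. By Proposition~\ref{prop:enclosedFace}, a rim is a face-node if and only if every $z$-beam extending from a horizontal edge of $r$ (going up or down) hits $r$ itself; if there are \emph{no} such $z$-beams, the condition is vacuously satisfied, so $r$ would be a face-node — contradicting the assumption that $r$ is a nonface-leaf. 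The one subtlety I must rule out is the degenerate case in which $r$'s only $z$-beam connection is to the opposite rim $r'_b$ of the same band via a $z$-edge that coincides with a shared segment: I would note that even then, $r$ has degree at least two (the $b$-edge plus this $z$-edge), so it is not a leaf; and more carefully, a $z$-beam from $r$ that returns to $r$ itself is what makes $r$ a face-node, while any $z$-beam leaving $r$ and reaching a different rim contributes a second incident edge. So in all cases a leaf node has its $b$-edge as sole incident edge, hence no outgoing $z$-beams, hence is a face-node.

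The main obstacle I anticipate is handling the $z$-beams of zero geometric height carefully, and making precise the claim that ``a rim with no incident $z$-edge other than its $b$-edge must be a face-node.'' I would lean on Proposition~\ref{prop:enclosedFace} to dispatch this: the proposition characterizes face-nodes exactly by the condition that all up/down $z$-beams from the rim's horizontal edges return to the rim, and a nonface-node by definition violates this, so it must emit at least one $z$-beam to a \emph{different} rim, giving it degree $\ge 2$. The connectivity argument is routine once the surface-walk-to-$G_r$-path translation is set up, so I expect the write-up to be short, with the bulk of the care going into the leaf-node case analysis.
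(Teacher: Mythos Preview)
Your proposal is correct and takes essentially the same approach as the paper: connectivity via the surface-connectedness of $P$ (the paper organizes this through ``$z$-patches,'' maximal connected surface pieces in a $Y$-plane, but the content is the same), and the nonface-leaf case dispatched exactly as the paper does by observing that a nonface-rim must, via Proposition~\ref{prop:enclosedFace}, emit a $z$-beam to some other rim, giving it a $z$-edge in addition to its $b$-edge and hence degree at least two. One minor cleanup: drop the ``vacuously satisfied'' phrasing---every rim has horizontal edges, so $z$-beams always extend from it; the correct contrapositive (which you already state in your final paragraph) is simply that a nonface-node has at least one $z$-beam reaching a \emph{different} rim.
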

\begin{proof}
Call a maximal connected surface piece of $P$ located in a plane $Y_i$ a $z$-\emph{patch} (suggestive of the fact that 
it might contain $z$-beams). 
First note that the subset of rims belonging to a $z$-patch induce a connected component in $G_r$ 
(call it a $z$-\emph{patch component}) that contains $z$-edges only. 
The connectedness of $P$'s surface implies that all $z$-patches are connected together by bands. In $G_r$, this 
corresponds to all $z$-patch components being connected together by $b$-edges. It follows that $G_r$ is connected.

Next we show that there are no nonface-leaves in $G_r$.
Suppose there is a node $r$ in $G_r$ of degree $1$ that does
not enclose a face of $P$.  Because all nodes in $G_r$ are
connected by a $b$-edge to the rim on the other side of the
band, there is a $b$-edge adjacent
to $r$.
Now consider extending $z$-beams up and down from every horizontal
edge of $r$. Because $r$ does not enclose a face of $P$, 
at least one of the $z$-beams must hit the rim of another band
by~\autoref{prop:enclosedFace}. 
But then $r$ also has a $z$-edge adjacent to it,
giving it a degree of at least $2$, a contradiction.
{\hfill\qed} 
\end{proof}
Note that $G_r$ for the genus-$3$ example in~\autoref{fig:g0dir}b has no leaf nodes at all,
and so satisfies this lemma vacuously.

Our next goal is to find a rim spanning tree $T_r$ of $G_r$ with at most $g$ nonface-leaves,
which will ultimately similarly limit the number of nonface-leaf nodes of $T$. 
The {\sc RimUnfoldingTree} method for 
achieving such a $T_r$ is outlined in~\autoref{alg:unftree}. 
It reduces $G_r$ to a tree by repeatedly removing a $z$-edge 
from an existing cycle, thus breaking the cycle.  In addition, it does this 
in such a way that at most $g$ nonface-leaf nodes are created. 
If we were to break a cycle by removing an 
arbitrary $z$-edge 
from it, it may be that both endpoints of the $z$-edge 
have degree two, and thus removing it would result in the creation 
of two new leaf nodes, both of which would be nonface-nodes. 
To avoid this, our {\sc RimUnfoldingTree} algorithm 
strategically selects a $z$-edge $e$ with at least one endpoint, say $u$, of degree 
$3$ or more. Thus the removal of $e$ results in the creation of at most one
new leaf. More importantly, the algorithm ensures that one of $u$'s three or more adjacent
edges is an edge
that is not part of any current (simple) cycle. Call this edge 
$e'$, and note that $e'$ will never be removed by the algorithm, because the algorithm 
only removes cycle edges.  
The existence of $e'$ guarantees that $u$'s degree will not drop below $2$
(for if the degree of $u$ were to reach $2$, then because one of the two adjacent
edges is $e'$ and not part of a cycle, the other adjacent edge cannot be part of a cycle either, and 
therefore neither edge will be removed). 
This property of $u$ will be important in bounding the number of nonface-leaf nodes created by the algorithm.

\begin{algorithm}
\centerline{$T_r$ = {\sc RimUnfoldingTree}($G_r$)}
{\hrule width 0.92\linewidth}\vspace{0.8em}
Initialize $T_r \leftarrow G_r$ \\
\While{$T_r$ is not a tree}{
	Let $H \subset T_r$ be the subgraph of $T_r$ induced by all simple cycles in $T_r$ \\
	Pick an arbitrary node $u \in V(H)$ incident to an edge $e'$ in $E(T_r) \setminus E(H)$ \\ 
      	Pick an arbitrary $z$-edge $e \in E(H)$ incident to $u$ \\
	Remove $e$ from $T_r$
 }
 \Return $T_r$
\vspace{2mm}\noindent{\hrule width 0.92\linewidth} \vspace{1mm} 
\caption{Computing a rim spanning tree of $G_r$.}
\label{alg:unftree}
\end{algorithm}

\begin{lemma}
The {\sc RimUnfoldingTree} algorithm produces a spanning tree of $G_r$.
\end{lemma}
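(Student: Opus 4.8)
The plan is to verify the two defining properties of a spanning tree of $G_r$: that the algorithm terminates with an acyclic graph, and that this acyclic graph is still connected and spans all of $V(G_r)$. Since the algorithm only ever deletes edges (never nodes), the vertex set is preserved automatically, so the content is (i) termination with a tree, and (ii) preservation of connectivity.

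First I would argue termination and acyclicity together. The loop runs exactly while $T_r$ is not a tree. At the start, $G_r$ is connected by \autoref{lem:terminal}, so while $T_r$ is connected but not a tree it must contain a simple cycle, hence the induced subgraph $H$ of all simple-cycle edges is nonempty. I would observe that $H$ always contains a $b$-edge if it contains any edge: actually the cleaner route is to note that every simple cycle in $G_r$ alternates in a controlled way, but all I really need is that $H$ is a nonempty subgraph in which every node has degree $\ge 2$ (each node of $H$ lies on a cycle). Then I need to justify that Step 4 can always find a node $u \in V(H)$ incident to an edge of $E(T_r)\setminus E(H)$, i.e., an edge of $T_r$ not lying on any simple cycle — a "bridge" of $T_r$ in the current state. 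If no such node existed, every edge of $T_r$ incident to a node of $H$ would lie in $H$; combined with $T_r$ being connected and $H$ being a proper nonempty subgraph (proper because $T_r$ is not a single cycle once... hmm), this would force $T_r = H$ and moreover $T_r$ to have no pendant structure. The key point is that $G_r$ has no nonface-leaves but it does have face-nodes, which by \autoref{prop:enclosedFace} have degree one; the $b$-edges incident to those degree-one nodes are never on a cycle, so $E(T_r)\setminus E(H)$ is nonempty throughout, and walking from such a pendant edge into the cyclic part $H$ produces exactly such a node $u$. Once $u$ exists, Step 5 requires a $z$-edge of $H$ incident to $u$; since $u$ has its $b$-edge among its edges and at most one $b$-edge, and $u$ has degree $\ge 2$ in $H$... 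I would note every node is incident to exactly one $b$-edge, so of the $\ge 2$ edges of $H$ at $u$, at least one is a $z$-edge. Removing a cycle edge strictly decreases the number of edges while keeping the graph connected (removing an edge lying on a simple cycle never disconnects), so after finitely many iterations $T_r$ has $|V(G_r)|-1$ edges and is connected, hence a tree.

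For connectivity preservation, the standard fact is that deleting an edge that lies on some cycle of a connected graph leaves it connected; since $e \in E(H)$ lies on a simple cycle of the current $T_r$, connectivity is maintained at every step. Combined with the edge-count argument, the final $T_r$ is a tree on $V(G_r) = V(G_r)$, i.e., a spanning tree.

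The main obstacle I expect is Step 4 — proving that at every iteration there really is a node $u$ of the cyclic part incident to a non-cycle edge, so that the algorithm is well-defined and does not get stuck. This is where the structural input from \autoref{lem:terminal} and \autoref{prop:enclosedFace} is essential: because $G_r$ is connected, has no nonface-leaves, yet (after orienting by \autoref{lem:root}) contains at least one face-node of degree one, the pendant $b$-edges at face-nodes are permanently outside every cycle, and any path from the cyclic subgraph $H$ toward such a pendant must pass through a node of $H$ with an incident edge leaving $H$. I would also want to double-check the degenerate possibility that $T_r$ becomes a single simple cycle with trees hanging off it: even then the hanging trees are nonempty (they contain the face-nodes), so $u$ exists; and if $T_r$ were literally a bare cycle with nothing attached, that would contradict the presence of the degree-one face-node, so this case cannot occur. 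Making these adjacency observations precise — every node has exactly one $b$-edge, face-nodes have degree one with their unique edge being a $b$-edge, removed edges are always $z$-edges — is the routine-but-necessary bookkeeping that underlies the argument.
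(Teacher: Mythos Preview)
Your proposal is correct and follows essentially the same approach as the paper: the paper likewise uses the degree-one face-node guaranteed by \autoref{lem:root} to show that $E(T_r)\setminus E(H)$ is nonempty (so a suitable $u$ exists), observes that each node is incident to exactly one $b$-edge (equivalently, no two $b$-edges are adjacent) to conclude that some cycle edge at $u$ is a $z$-edge, and then argues termination because removing a cycle edge shrinks $H$ while preserving connectivity. Your write-up is a bit more exploratory, but the key steps and the structural inputs you identify are exactly those the paper uses.
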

\begin{proof}
By~\autoref{lem:root}, $G_r$ (and therefore $T_r$) includes a node of degree one that is not part of a cycle in $G_r$ and therefore is not in $H$. This implies that there is at least one edge in $E(T_r) \setminus E(H)$ incident to a node $u$ of $H$ (because $T_r$ is connected). This proves the existence of the node $u$ picked in each iteration of the {\sc RimUnfoldingTree} algorithm. Because $u$ is part of a least one cycle in $H$, its degree is at least two in $H$. The edge in $E(T_r) \setminus E(H)$ incident to $u$ contributes another unit to the degree of $u$; therefore $u$ has degree at least three in $T_r$. 
By the definition of $G_r$, no two $b$-edges in $G_r$ are adjacent, since any two $b$-edges are connected by a path of one or more $z$-edges in $G_r$. 
(Recall a $z$-edge might have zero geometric height, when two rims share a common segment.) 
This implies that, out of the two or more edges in $E(H)$ incident to $u$, at least one is a $z$-edge. This proves the existence of the edge $e$ picked in each iteration of the {\sc RimUnfoldingTree} algorithm. Removing $e$ from $T_r$ breaks at least one cycle in $H$, so the size of $H$ decreases in each loop iteration. It follows that the {\sc RimUnfoldingTree} algorithm terminates and produces a tree $T_r$ that spans all nodes of $G_r$.
{\hfill\qed} 
\end{proof}

\begin{theorem}
The number of nonface-leaves in the rim tree $T_r$ produced by the {\sc RimUnfoldingTree} algorithm is no greater than the genus $g$ of $P$. 
\label{thm:termleaves}
\end{theorem}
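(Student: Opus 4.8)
The plan is to track the creation of nonface-leaves across the execution of the {\sc RimUnfoldingTree} algorithm and charge each one to a distinct unit of genus. The key structural fact I want to exploit is that $G_r$ contains exactly $g$ independent cycles in the relevant sense: since $G_r$ is connected (\autoref{lem:terminal}), its cycle rank (first Betti number) is $|E(G_r)| - |V(G_r)| + 1$, and I expect this to equal $g$, because each independent cycle in the rim graph corresponds to a handle of $P$. Each iteration of the loop removes exactly one $z$-edge $e$ that lies on a cycle of $H$, so the cycle rank drops by exactly $1$ per iteration; hence the loop runs exactly $g$ times, and at most $g$ new leaves are ever created (since $e$'s endpoint $u$ is guaranteed by the algorithm to retain degree $\ge 2$, only the \emph{other} endpoint $v$ of $e$ can become a leaf). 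So there are at most $g$ "newly created" leaves; the remaining task is to show that none of the \emph{original} leaves of $G_r$ is a nonface-leaf, and that a newly created leaf that happens to be a face-node doesn't need charging — but in fact \autoref{lem:terminal} already tells us $G_r$ has no nonface-leaves at all, so every original leaf is a face-node and contributes nothing.

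First I would establish the cycle-rank/genus identity, or at least the inequality (cycle rank of $G_r$) $\le g$, which is all that is needed; this may follow by a careful Euler-characteristic argument on the cell structure $P$ inherits from the $Y$-slicing, relating faces/edges/vertices of $G_r$ to the handles. Second, I would argue that each loop iteration decreases the number of independent cycles by exactly one: removing a cycle edge from a connected graph cannot disconnect it and drops the cycle rank by one, so the number of iterations is exactly the initial cycle rank. Third, in each iteration I would verify that at most one new leaf appears: by the algorithm's selection rule, $u$ is incident to the non-cycle edge $e'$ and had degree $\ge 3$ before removing $e$, so $\deg(u) \ge 2$ afterward and $u$ does not become a leaf; only $v$, the other endpoint of $e$, can drop to degree $1$. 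Fourth — and this is the subtle point — I must make sure a leaf created in one iteration and later ``healed'' (its degree raised again, or re-examined) does not get double-counted, and that once a node becomes a leaf it stays one; since the algorithm only ever removes edges, degrees are monotone non-increasing, so a leaf stays a leaf and each of the $\le g$ iterations creates at most one leaf that persists.

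Finally I would assemble the count: the leaves of the final tree $T_r$ are a subset of (original leaves of $G_r$) $\cup$ (the $\le g$ newly created leaves). Original leaves of $G_r$ are face-nodes by \autoref{lem:terminal}, and a face-node is never incident to a $z$-edge (it has degree one by \autoref{prop:enclosedFact}, er, \autoref{prop:enclosedFace}), so removing $z$-edges never turns a face-node into a nonface-node; thus every original leaf that survives in $T_r$ is still a face-leaf. Hence all nonface-leaves of $T_r$ come from the $\le g$ newly created leaves, giving the bound.

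The hard part will be the first step: pinning down precisely why the cycle rank of $G_r$ is bounded by the genus $g$ of $P$. This requires a clean correspondence between cycles in the rim graph and homology classes on the surface of $P$ — one must show that a set of independent cycles in $G_r$ maps to independent (non-nullhomologous) loops on $P$, which is essentially a topological argument about how the $Y$-slicing decomposes the surface. The monotonicity bookkeeping and the face-node invariance are routine by comparison; I would make sure to state the genus identity as its own lemma (or cite the analogous fact for the band graph $G$) and then the theorem follows cleanly.
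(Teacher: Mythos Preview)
Your approach has a genuine gap: the claim that the cycle rank of $G_r$ is bounded by (or equals) the genus $g$ is false, and the rest of your argument collapses without it.

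Here is a concrete counterexample. Take a genus-$0$ box with two rectangular dents carved into its forward face, one positioned directly above the other. In the forward $Y$-plane the $z$-patch is a rectangle with two holes. Its three boundary rims are the outer rim $r_1$, the upper hole rim $r_2$, and the lower hole rim $r_3$. A vertical $z$-beam drops from the top edge of $r_1$ to the top of $r_2$; another drops from the bottom of $r_2$ to the top of $r_3$; a third drops from the bottom of $r_3$ to the bottom edge of $r_1$. Hence $G_r$ contains the $z$-edges $(r_1,r_2)$, $(r_2,r_3)$, $(r_3,r_1)$, a $3$-cycle, so the cycle rank of $G_r$ is at least $1$ while $g=0$. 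The {\sc RimUnfoldingTree} loop will therefore execute at least once even in genus $0$; your bound ``number of new leaves $\le$ number of iterations $=$ cycle rank $\le g$'' breaks at the last inequality. (In this particular example the single iteration happens not to create a leaf, since all three rims have degree $\ge 3$; but that is exactly the point --- the number of iterations is \emph{not} what bounds the leaf count, and you have no separate argument distinguishing leaf-creating iterations from the others.)

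The homology heuristic you sketch is precisely where things go wrong: the $3$-cycle above corresponds to a loop on the surface of $P$ that lies entirely in the forward face and is null-homologous (it bounds a disk on that face). So independent cycles in $G_r$ do \emph{not} in general map to independent classes in $H_1$ of the surface; cycles confined to a single $z$-patch contribute to the cycle rank but not to the genus.

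The paper avoids this problem by never counting iterations at all. It lets $r_1,\ldots,r_k$ be the leaves actually created, observes that each such $r_i$ is incident to a $b$-edge (so its parent is the other rim of a band $b_i$), and then cuts the surface of $P$ around the middle of each $b_i$. Using the removed $z$-edge together with the tree path in $T_r$, it exhibits for every $i$ a surface path reconnecting the two sides of the $i$th cut without crossing any of the other cuts. Thus all $k$ cuts together leave the surface connected, which forces $k\le g$. This argument charges created leaves directly to non-separating curves on the surface, bypassing any statement about the cycle rank of $G_r$.
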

\begin{proof}
Consider a $z$-edge $e = (u, v)$ removed from $T_r$ in one iteration of the {\sc RimUnfoldingTree} algorithm. The node $u$ is incident to at least two edges in $E(H)$ and at least one edge in $E(T_r) \setminus E(H)$; therefore its degree is at least three in $T_r$. The removal of $e$ from $T_r$ leaves $u$ of degree at least two, so $u$ does not become a leaf in $T_r$. This argument holds even if $u$ is picked repeatedly in 
future iterations of the {\sc RimUnfoldingTree} algorithm, so $u$ will not become a leaf in $T_r$. 

If $v$ has degree three or more in $T_r$ prior to removing $e$ from $T_r$, then $v$ does not become a leaf after removing $e$. So suppose that $v$ has degree two in $T_r$ before removing $e$. Recall that every node in $G_r$ is connected by a $b$-edge to the other rim of its band. Because the 
{\sc RimUnfoldingTree} algorithm never removes a $b$-edge, this property holds in $T_r$ as well. It follows that 
 the other edge incident to $v$ (in addition to the $z$-edge $e$) must be a $b$-edge. Hence 
any leaf node in $T_r$ created by the {\sc RimUnfoldingTree} algorithm is an endpoint of a $b$-edge in $T_r$. 

By~\autoref{lem:terminal}, $G_r$ does not include any nonface-leaves, so any nonface-leaves in $T_r$ must have been created by the {\sc RimUnfoldingTree} algorithm. 
Let $r_1, r_2, \ldots, r_k$ be the set of leaves in $T_r$ created by the {\sc RimUnfoldingTree} algorithm. 
If $k \le g$, then the theorem is true. 

\begin{figure}[htbp]
\centering
\includegraphics[width=0.9\linewidth]{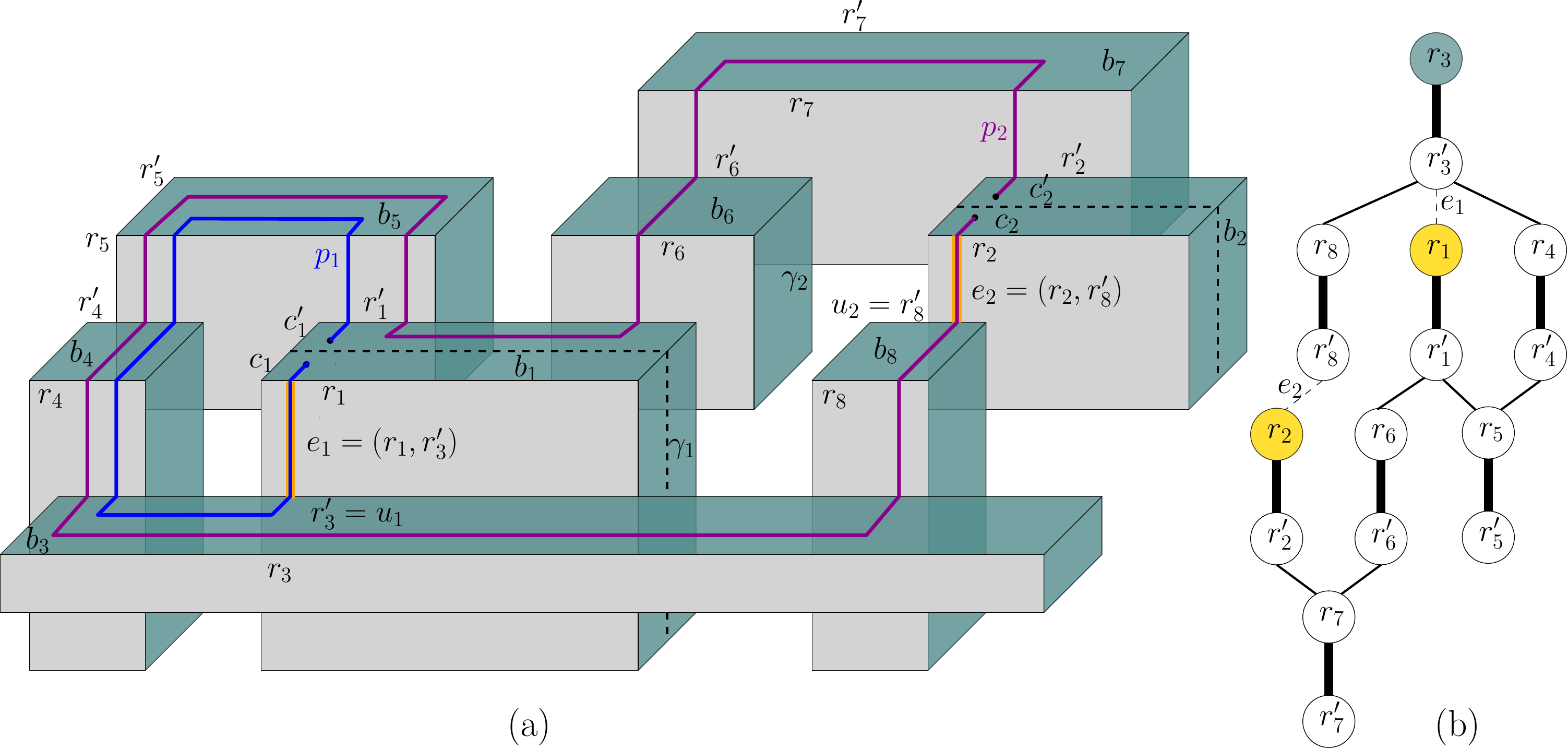}
\caption{\autoref{thm:termleaves} (a) example polyhedron of genus 2; cuts around the middle of bands $b_1$ and $b_2$ are marked with dashed lines  (b) 
Rim tree $T_r$ with root $r_3$ and $2$ nonface-leaves, $r_1$ and $r_2$.
} 
\label{fig:genus}
\end{figure}

Assume now that $k > g$. For each $i$, let $b_i$ be the band with rim $r_i$. 
Because $T_r$ includes all the $b$-edges from $G_r$, it must be that $T_r$ includes the $b$-edge $(r_i, r'_i)$ corresponding to $b_i$, so the parent $r'_i$ of $r_i$ in $T_r$ 
is the other rim of $b_i$.
For each $i = 1, 2, \ldots, k$, we perform a cut on $P$'s surface 
along a closed curve around the middle of $b_i$, between its rims $r_i$ and $r'_i$. 
Refer to~\autoref{fig:genus}a, which shows the cuts around the middle of $b_1$ and $b_2$ as dashed lines. Note that $r_1$ and $r_2$ are nonface-nodes corresponding to $b_1$ and $b_2$ respectively, as inferred from the rim unfolding tree $T_r$ shown in~\autoref{fig:genus}b (edges of $T_r$ are marked solid, with $b$-edges 
thicker than $z$-edges).
We now show that these cuts do not disconnect the surface 
of $P$, contradicting our assumption that $k > g$. 

To see this, consider any two points $c_i$ and $c'_i$ on $b_i$ that are separated by the cut around $b_i$, with $c_i$ on the same side of the cut 
as $r_i$, and $c'_i$ on the same side of the cut as $r'_i$. 
Let $e_i = (r_i, u_i)$ be the $z$-edge whose removal from $T_r$ created the leaf $r_i$ (so $u_i$ here plays the role of $u$ in 
the {\sc RimUnfoldingTree} algorithm, and by our observation above, $u_i$ is not a leaf in $T_r$). 
We construct a path $p_i$ on the surface of $P$ connecting $c_i$ to $c'_i$ as follows: 
the path $p_i$ starts at $c_i$, moves towards $r_i$ and along the $z$-beam corresponding to $e_i$ to the rim $u_i$, 
then follows the path in $T_r$ from $u_i$ to the rim $r'_i$, and finally across $b_i$ to $c'_i$. 
See~\autoref{fig:genus}, which traces the paths $p_1$ and $p_2$ on the polyhedron surface.
Note that $r_i$ is the only leaf visited by $p_i$ (because $u_i$ is not a leaf), so $p_i$ does not cut across any of the 
leaves $r_j$, for $j \neq i$. This implies that $p_i$ connects $c_i$ and $c'_i$ in the presence of all the other cuts 
$p_j$, with $j \neq i$. Since this is true for each $i$, we conclude that these cuts leave the surface of $P$ connected, 
contradicting our assumption that $k > g$. 
{\hfill\qed} 
\end{proof}


\subsection{The Unfolding Algorithm}
\label{sec:unf2}
Let $T_r$ be the rim tree computed by the {\sc RimUnfoldingTree} algorithm described in~\autoref{sec:tree2}. We pick the root of $T_r$ to be a face-node  identified by~\autoref{lem:root}, and call its corresponding band the \emph{root band}. This guarantees that the front face of the root band
is a face of $P$ and thus it can be 
used in constructing the connector paths linking the spiral
paths associated with the root's children. 

For ease in describing the modified spiral paths 
we use for genus-$1$ and genus-$2$ polyhedra, 
we first 
convert $T_r$ into a standard unfolding tree $T$ having bands for nodes rather than rims.
We do this by simply contracting the $b$-edges.  Specifically, we replace each $b$-edge
$(r_i,r_i')$ 
and its two incident nodes $r_i$ and $r'_i$ by 
a single node $b_i$ whose incident edges 
are the $z$-edges incident to $r_i$ or $r_i'$. Let $T$ be the tree resulting 
from $T_r$ after 
contracting all the $b$-edges, with its root node corresponding to the root band. 

Observe that the edges in $T$ are in a one-to-one correspondence
with the $z$-edges in $T_r$. 
For example, consider a $z$-edge $(r_i,r_j)$ in $T_r$ such that node $r_i$ is the parent of $r_j$.
Then in $T$, there is an edge $(b_i,b_j)$ where $b_i$ is the parent of $b_j$.  Furthermore, 
$r_j$ is the front rim of $b_j$ because of the $z$-beam connection
to its parent determined by the $z$-edge $(r_i,r_j)$.   
Any other nodes adjacent to $b_j$ in $T$
are  its front or back children, depending on whether the
corresponding $z$-edges in $T_r$ connect to its front rim ($r_j$) or its back rim.
In this way, the rim connections that are explicitly represented in $T_r$ are preserved in
$T$ through the assignment of front and back rims/children. 

There is not, however, an immediate one-to-one correspondence
between leaf nodes in $T_r$ and leaf nodes in $T$. 
If $r_i'$ is a leaf in $T_r$ and its parent  $r_i$
has no other children,  then clearly $b_i$ in $T$ has degree
$1$, and we note that its back rim is $r_i'$.  
See, for example, the leaf $r_5$ in~\autoref{fig:g2tree}c (ahead) and the corresponding leaf 
$b_5$ in~\autoref{fig:g2tree}d.
But suppose $r_i$ has one or more other children in $T_r$ besides the leaf $r_i'$. Then these other
children are connected via $z$-edges to $r_i$, and node 
$b_i$ in $T$
has degree 
greater than $1$ and is not a leaf.  Specifically,
$b_i$ in $T$ has one or more front children connected via $z$-beams to its
front rim $r_i$, and it has no back children (because its back rim $r_i'$ is a leaf in $T_r$).  
This is the case for the leaf node $r_2$ from~\autoref{fig:g2tree}c, whose parent 
node $r'_2$ has another child $r_4$; the corresponding node $b_2$ in $T$ is not a leaf in $T$.  
Similarly, the parent $r_7$ of leaf node $r'_7$ from~\autoref{fig:g2tree}c has another child $r'_5$, 
and the corresponding node $b_7$ is not a leaf in $T$.
But we handle this in the same way as in a standard 
unfolding tree (as described in~\autoref{sec:extrusions-unftree})
by splitting band $b_i$ into two bands $b_i'$ and $b_i''$.  
In $T$, $b_i$ is
replaced by $b_i'$ and has $b_i''$ as a back leaf child.  
See~\autoref{fig:g2tree}d, which shows node $b_2$ split into $b'_2$ and $b''_2$, and 
node $b_7$ split into $b'_7$ and $b''_7$.
The front
rim of $b_i'$ is rim $r_i$ and the back rim of $b_i''$ is rim $r_i'$.
In this way,
each leaf $r_i'$ in $T_r$ has a corresponding leaf node in $T$ whose
back rim is $r_i'$, and vice versa.  If $r_i'$ is a nonface-leaf in $T_r$, then we  will
also say the corresponding leaf
in $T$ is a nonface-leaf, meaning that its band's back rim does not enclose
a face of $P$. These observations, along with~\autoref{thm:termleaves}, establish the following corollary.  

\begin{corollary}
The number of nonface-leaves in the unfolding tree $T$ is at most $g$, where $g \ge 0$ is the genus of $P$.
\label{cor:termleaves}
\end{corollary}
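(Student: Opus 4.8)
The plan is to obtain the corollary directly from \autoref{thm:termleaves} by tracking nonface-leaves through the $b$-edge contraction (and subsequent band splitting) that converts $T_r$ into $T$. First I would recall that \autoref{thm:termleaves} bounds the number of nonface-leaves of $T_r$ by $g$, and that \autoref{lem:terminal} guarantees every nonface-leaf of $T_r$ was created by the {\sc RimUnfoldingTree} algorithm, hence is the non-parent endpoint $r_i'$ of a removed $z$-edge and is joined to its parent $r_i$ in $T_r$ by the $b$-edge $(r_i, r_i')$. So it suffices to exhibit a bijection between the nonface-leaves of $T_r$ and those of $T$.

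Next I would define the correspondence by sending a leaf $r_i'$ of $T_r$ to the leaf of $T$ whose back rim is $r_i'$. If $r_i$ has no child in $T_r$ other than $r_i'$, then contracting the $b$-edge $(r_i,r_i')$ yields a degree-one node $b_i$ of $T$ with back rim $r_i'$; if $r_i$ has additional children in $T_r$, the splitting step of \autoref{sec:extrusions-unftree} replaces $b_i$ by $b_i'$ together with a back leaf child $b_i''$ whose back rim is again $r_i'$. In either case we get a unique leaf of $T$ with back rim $r_i'$, and conversely every leaf of $T$ arises in exactly one of these two ways, so its back rim is some leaf $r_i'$ of $T_r$. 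By the definition of nonface-leaf adopted just above the corollary (the band's back rim fails to enclose a face of $P$), this leaf of $T$ is a nonface-leaf precisely when $r_i'$ is a nonface-node, i.e.\ precisely when $r_i'$ is a nonface-leaf of $T_r$. Thus the correspondence restricts to a bijection between nonface-leaves of $T_r$ and nonface-leaves of $T$, and combining with \autoref{thm:termleaves} gives at most $g$ nonface-leaves in $T$.

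The one point requiring genuine care — and the step I expect to be the main obstacle, though it is still essentially bookkeeping — is verifying that the contraction and the split do not manufacture a nonface-leaf of $T$ with no counterpart in $T_r$. For the contraction this amounts to checking that a degree-one node of $T$ can only come from a rim $r_i'$ that is itself a leaf of $T_r$ (which holds because contracting a $b$-edge merely merges its two endpoints and keeps all incident $z$-edges), and for the split it amounts to observing that the newly introduced leaf $b_i''$ has back rim equal to the original back rim $r_i'$, so it inherits the face/nonface status of $r_i'$ verbatim and introduces no new nonface-leaf. Finally, the root band of $T$ is the face-node $r_F$ furnished by \autoref{lem:root}, so it is not a nonface-leaf (and is excluded as the root in any case), and no leaf of $T$ is overlooked. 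This completes the argument; for $g=0$ it reduces to \autoref{thm:termleaves} giving zero nonface-leaves, consistent with the genus-0 algorithm of \autoref{sec:genus0}.
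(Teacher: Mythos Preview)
Your proposal is correct and follows essentially the same route as the paper. The paper derives the corollary from the discussion immediately preceding it: contracting the $b$-edges of $T_r$ and then splitting any band with no back child yields, for each leaf $r_i'$ of $T_r$, a unique leaf of $T$ whose back rim is $r_i'$ (and vice versa), so the nonface status is preserved and \autoref{thm:termleaves} transfers directly --- exactly the bijection you spell out, with your added bookkeeping about contraction and splitting not manufacturing extra nonface-leaves being implicit in the paper's ``and vice versa.''
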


Using $T$, we assign unfolding directions to each band as described in~\autoref{sec:dir}. 
If $T$ has no nonface-leaves, then 
we complete the unfolding of $P$ using the linear unfolding algorithm from~\cite{Chang2015} (as summarized in~\autoref{sec:genus0}). 
We now show how to modify the unfolding algorithm to handle the cases when $T$ has
one or two nonface-leaves.   
Because a nonface-leaf $b$ has a back rim that does not enclose a face of $P$,  
there might be no vertical 
back face segment available for $b$'s connector path.
For these leaves, we use a spiral path that has one
endpoint on $b$'s back rim and the other endpoint on $b$'s 
front rim. The path cycles 
around the band faces in the unfolding direction of $b$
from the back point to the front
point. (In~\autoref{fig:leaf}, this would be just the 
portion of the path 
that extends from the top of the back rim to endpoint $e_1$.) 
\autoref{cor:termleaves} implies that this can 
occur for at most two leaves in $T$ (since we assume $P$ to have genus 
$g \le 2$). 
For all face-leaves, we 
proceed as before in extending both ends of the spiral path from the band's
back rim to its front rim.

The processing of internal nodes is handled 
as before by extending
the spiral paths towards the root.
The only difference is that for the (at most $2$) spiral paths
originating at nonface-leaves, there is only one end
of the path to extend. 
After processing the internal nodes, the ends of the spiral paths 
are at the root band.   Specifically, each spiral path originating at a face-leaf has two endpoints
located on the back rim of the root band, and each spiral path originating at a nonface-leaf has one endpoint located on the back rim of the root band. 
The challenge here is  to connect all  these spiral paths together at the root band into a single final strip that starts on the back rim of one nonface-leaf and (if there is a second nonface-leaf) ends on the back rim of the other.  
(This is one place where the assumption that $g \le 2$, and so there are at most
two nonface-leaves (by~\autoref{cor:termleaves}), is crucial.)
Because the front rim of the root node does enclose 
a face of $P$, it is possible to use strips from its enclosed face for the connectors. 

\subsubsection{Root Node Unfolding: One Nonface-Leaf Case}
\label{sec:oneterminal}
First we describe how to link the leaves' extended spiral paths together at the root node, and unfold the root band in the process,  for the case when $T$ has exactly one nonface-leaf. 
Let $\xi_1, \xi_2, ..., \xi_\ell$ be the spiral paths corresponding to the $\ell$ face-leaves in $T$ 
(excluding the one nonface-leaf). 
After processing all the internal nodes, the two ends of each of these spiral paths are located side-by-side on the back rim $r$ of the root band,
as previously illustrated in~\autoref{fig:unf0}. 
Let $t$ be the spiral path corresponding to the nonface-leaf. One end of $t$ is on $r$, and the other end is on the back rim of its leaf. If $\ell > 1$, we assume the spiral paths of the face-leaves are labeled in clockwise order around the root's back rim, with $t$ located in the middle between $\xi_{\lceil \ell/2 \rceil}$ and $\xi_{\lceil \ell/2 \rceil + 1}$. We then begin by linking all these spiral paths into one strip $\xi$ as described in~\autoref{sec:root} for the case when there are no nonface-leaves.  
I.e., starting with the pair $R(\xi_1)$ and $R(\xi_\ell)$, the ends of the spiral paths are paired up and linked together via connector paths. 
(Recall that, for each $\xi_i$, the endpoints $L(\xi_i)$ and $R(\xi_i)$ are encountered in this order in a clockwise walk along the back rim of the root band, starting, say, at $t$ or at a rim corner.)
The only difference is that, with $t$ in the middle between
$\xi_1$ and $\xi_\ell$, the last pair of spiral path endpoints linked together will be $R(\xi_{\lceil \ell/2 \rceil})$ and $t$ when $\ell$ is odd, and 
$t$ and $L(\xi_{\lceil \ell/2 \rceil + 1})$ when $\ell$ is even. 
The remainder 
of the unfolding is the same. Thus the resulting spiral $\xi$ starts
at $L(\xi_1)$ and ends on the back rim of the nonface-leaf. 
 
\subsubsection{Root Node Unfolding: Two Nonface-Leaves Case}
We now discuss the more complex case when $T$ has two nonface-leaves. In this case, the final spiral path will have its two ends on the back rims of the two nonface-leaves. 
Again let $\xi_1, \xi_2, ..., \xi_\ell$ be the spiral paths corresponding to the face-leaves,
and let $t_1$ and $t_2$ be the spiral paths corresponding to the two nonface-leaves.  We assume that $t_1$ and $t_2$ are labeled such that the number of spiral paths
separating them counterclockwise from $t_2$ to $t_1$ on the root's back rim is at most $\lfloor\ell/2\rfloor$.
(If it is more than $\lfloor\ell/2\rfloor$, then we just switch the labels of $t_1$ and $t_2$.)  
If $\ell > 1$, we further assume that the spiral paths of the face-leaves are labeled so that $t_1$ is in the middle between $\xi_{\lceil \ell/2 \rceil}$ and $\xi_{\lceil \ell/2 \rceil + 1}$. Observe that these labeling rules position $t_2$ on the portion of the rim
counterclockwise between $\xi_1$ and $t_1$.
See~\autoref{fig:twoterminals} for an example with $\ell = 4$, which 
shows the root band, the spiral path endpoints $t_1$ and $t_2$ marked on the back rim of the root band, and the spiral paths  
$\xi_1, \dots \xi_4$ depicted as thick dotted arcs above the root band. 
Note that the counterclockwise ordering  from $\xi_1$ along the rim is: 
$\xi_1, \ldots, t_2, \ldots, t_1$. 

\begin{figure}[htbp]
\centering
\includegraphics[width=0.6\linewidth]{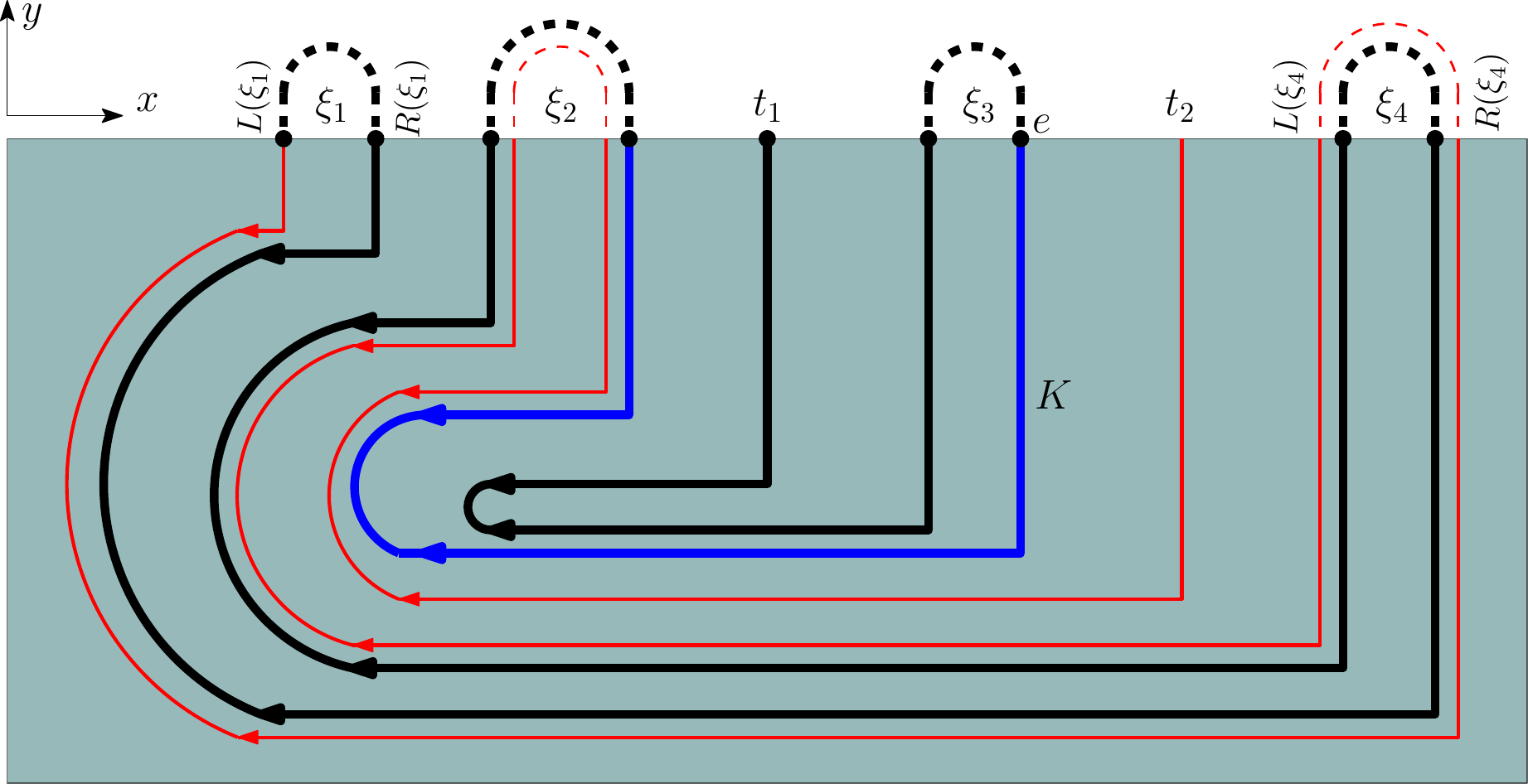}
\caption{Root band, with $\xi$ consisting of the thick (black) and thin (red) paths. The thick (black) path connects $\xi_1, \ldots, \xi_4$ and $t_1$. 
The thin (red) path is the extension of $\xi$ that connects $L(\xi_1)$ to $t_2$. For the two ends of each $\xi_i$,
the dotted paths indicate the spiral path connecting them, which goes down to the associated leaf node and back.} 
\label{fig:twoterminals}
\end{figure}

We begin by linking the spiral paths $\xi_1, \dots, \xi_\ell$ and $t_1$ into a single strip $\xi$ as described in~\autoref{sec:oneterminal} for the case of one nonface-leaf. (See the thick black/blue path marked on the root band in~\autoref{fig:twoterminals}, which shows the portion of
$\xi$ starting at $R(\xi_1)$ and connecting $\xi_1, \dots \xi_4$ and $t_1$.)
Note that $\xi$ has one end at $L(\xi_1)$ and the other end on the back rim of the nonface-leaf corresponding to $t_1$.   
Linking in $t_2$, however, requires some special handling. We will do this by extending $\xi$ from its endpoint $L(\xi_1)$ 
all the way to $t_2$ by following alongside the portion of $\xi$ that tracks 
from $R(\xi_1)$ to $t_1$, until $t_2$ is reached.  (Refer to the thin red path in~\autoref{fig:twoterminals}.) 
At all times, this 
extension is tracing a new path 
on one side of the path it is following. Specifically, starting from
$L(\xi_1)$,  the extension first follows alongside the connector path 
linking  $R(\xi_1)$ to $R(\xi_\ell)$. 
It then follows alongside the spiral path from $R(\xi_\ell)$ to $L(\xi_\ell)$, going all the way down to the leaf node corresponding to $\xi_\ell$ and then back up.  It next follows alongside the connector path linking $L(\xi_\ell)$ to $L(\xi_2)$, and then alongside the spiral path from $L(\xi_2)$ to $R(\xi_2)$, and so on.  This continues until the extension reaches the 
connector path, say $K$,  
that links to the end of the strip piece
located immediately counterclockwise from $t_2$ on the rim.  
This connector is drawn blue in~\autoref{fig:twoterminals}. Because at most 
$\lfloor \ell/2 \rfloor$ spiral paths are located counterclockwise from $t_2$ to $t_1$, the
strip piece
located immediately counterclockwise of $t_2$ 
is either $t_1$ or one of $\{ \xi_{\lceil \ell/2 \rceil + 1},  \xi_{\lceil \ell/2 \rceil + 2}, \dots, \xi_\ell \}$.  If it is one strip piece of the latter, say $\xi_i$, then the connector $K$ links to $R(\xi_i)$. 
Let $e$ be the end of the spiral path ($t_1$ or $\xi_i$) to which 
the connector $K$ links.  

When the extension reaches the connector path $K$, it follows alongside it, 
but instead of following it all the way to $e$, the extension continues 
past $e$ until it reaches $t_2$, at which time it connects with $t_2$ and we are finished. 
This path is drawn 
with a thin red line in~\autoref{fig:twoterminals}: it starts at $L(\xi_1)$,  
follows alongside the thick black path that extends from $R(\xi_1)$ to $e = R(\xi_3)$, and from there 
it reaches $t_2$. The result is the final unfolding spiral $\xi$ with one end on the back face of the nonface-leaf 
corresponding to $t_1$, and the other end on the back face of the nonface-leaf corresponding to $t_2$.  

\begin{figure}[htbp]
\centering
\includegraphics[width=0.98\linewidth]{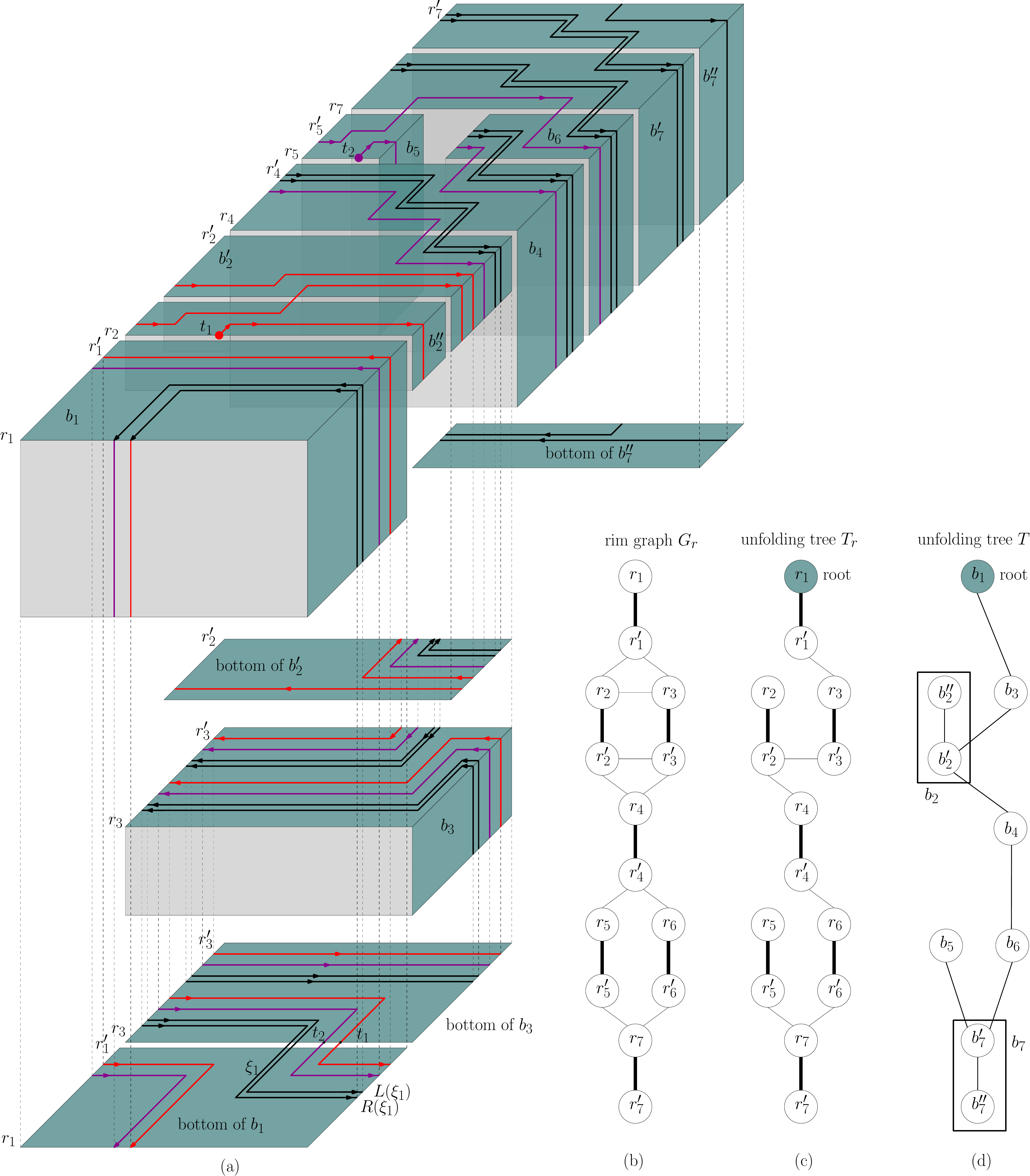}
\caption{Unfolding the genus-2 polyhedron from~\autoref{fig:g2poly}, with slabs slightly separated for clarity (a) spiral path corresponding to the unfolding tree $T$ from (d) of this figure; arrows indicate the direction followed by the unfolding algorithm; (b) rim graph $G_r$; (c) rim unfolding tree $T_r$ (one of several possible) extracted from $G_r$ by the {\sc RimUnfoldingTree} algorithm; (d) unfolding tree $T$ obtained by compressing $b$-edges of $T_r$ and splitting internal nodes with no back children: $b_1$ is the root, $b''_2$ and $b_5$ are nonface-leaves, $b''_7$ is a face-leaf. 
} 
\label{fig:g2tree}
\end{figure}
\autoref{fig:g2tree} shows the complete 
spiral path for the 
genus-$2$ polyhedron from~\autoref{fig:g2poly}, with the slabs slightly separated for clarity. In this example $\ell = 1$, $\xi_1$ corresponds to node $b''_7$ with back face-rim $r'_7$, and the $t_1$ and $t_2$ paths correspond to 
$b''_2$ and $b_5$, with back nonface-rims $r_2$ and $r_5$, respectively. 
(For clarity, the paths $t_1$ and $t_2$ are labeled on the bottom face of $b_3$, as they make the transition to the bottom face of $b_1$.)  

\subsection{Level of Refinement}
Using the Chang-Yen algorithm~\cite{Chang2015}, any genus-0 
orthogonal polyhedron $P$ can be unfolded using linear refinement. Specifically,
they refine each rectangular face of $P$ via 
grid-unfolding 
using a $(2\ell \times 4\ell)$-grid, where
$\ell$ is the number of leaves in $T$, and cuts are allowed along any of these grid
lines.  In the worst case, the algorithm presented here requires at most
twice the level of refinement, i.e., $(4\ell \times 8\ell)$ refinement. 
This
level of refinement is necessary when there are two nonface-leaves in $T$.
In this case, the first part of the spiral path (from $L(\xi_1)$ to the end of $t_1$
located on the back rim of the first nonface-leaf is the same as in the Chang-Yen algorithm,
but with care given to the order in which the spiral paths are connected and with
only one endpoint of $t_1$ at the root.   
(This is the black and blue portions of the path
illustrated in~\autoref{fig:twoterminals}.) Thus no additional
refinement is necessary for
this part of the path.  The doubling of the refinement is due
to the retracing of this path which is needed to extend it from $L(\xi_1)$ to $t_2$,
the spiral path of the second nonface-leaf.
(This is the red portion of the path in~\autoref{fig:twoterminals}.) Because the retracing follows
alongside the existing path, it requires twice the refinement.  Thus, the overall
level of refinement is $(4\ell \times 8\ell)$, which is linear. 

\medskip
\noindent
We conclude with this theorem:
\begin{theorem}
Any orthogonal polyhedron of genus $g \le 2$
may be unfolded to a planar, simple orthogonal polygon,
cutting along a linear grid-refinement.
\label{thm:conclusion}
\end{theorem}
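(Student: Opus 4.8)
The plan is to chain together the structural results and the unfolding construction developed above into a single procedure, and then to verify that its output is a non-overlapping planar orthogonal polygon produced with linear grid-refinement. First I would orient $P$ according to the slicing direction guaranteed by \autoref{lem:root}, so that the rim graph $G_r$ has a face-node $r_F$; by \autoref{lem:terminal}, $G_r$ is connected and has no nonface-leaves. Next I would run the {\sc RimUnfoldingTree} algorithm (\autoref{alg:unftree}) to obtain a spanning tree $T_r$ of $G_r$, root it at $r_F$, and contract its $b$-edges (splitting any internal band with no back child, as in \autoref{sec:extrusions-unftree}) to form the band-based unfolding tree $T$. By \autoref{cor:termleaves}, $T$ then has at most $g \le 2$ nonface-leaves, and the front rim of its root band encloses a face of $P$, so that face is available for the root connector paths.

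With $T$ in hand, I would run the unfolding procedure of \autoref{sec:unf2}: assign each band an unfolding direction by the preorder rule of \autoref{sec:dir}; at each leaf build a spiral surface path --- the usual two-strand connector spiral for a face-leaf, and a single-strand spiral from the back rim to the front rim for each of the at most two nonface-leaves; extend every spiral across the internal nodes toward the root, cycling completely around each band it traverses; and finally stitch all the spirals together at the root band. When $T$ has no nonface-leaf this last step is exactly the Chang--Yen root construction~\cite{Chang2015}; when it has one or two, I would use the root-linking schemes of \autoref{sec:oneterminal} and of \autoref{fig:twoterminals} respectively, yielding a single spiral $\xi$ whose ends rest on the back rims of the nonface-leaves. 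Thickening $\xi$ in the $\pm y$ directions then covers every band face (this is precisely where the ``every internal node has a back child'' property and the full cycling around each band are used), and hanging the forward and rearward face pieces above and below the resulting monotone staircase, partitioned by downward illuminating rays from top and bottom rim edges as in the genus-0 procedure of \autoref{sec:genus0}, lays the entire surface of $P$ flat.

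It then remains to check two claims. For refinement, I would bound the grid lines used: the portion of $\xi$ running through $\xi_1, \dots, \xi_\ell$ and $t_1$ is laid out as in Chang--Yen and needs only $(2\ell \times 4\ell)$ refinement, while the retracing extension from $L(\xi_1)$ to $t_2$ in the two-nonface-leaf case, following alongside that portion, at most doubles the requirement to $(4\ell \times 8\ell)$, which is linear in the number of leaves and hence in the size of $P$. The main obstacle is non-overlap. I would argue, following~\cite{Chang2015}, that each connector and spiral segment flattens to a staircase monotone in a fixed axis direction, that the side-by-side tracing rule keeps distinct spirals disjoint in the plane, and --- the genuinely new point --- that the retraced strip carrying $\xi$ from $L(\xi_1)$ out to $t_2$ stays inside the narrow corridor alongside the strip it follows and therefore meets no previously placed piece; the labeling conventions (putting $t_1$ in the middle of the face-leaf order and $t_2$ counterclockwise between $\xi_1$ and $t_1$, with at most $\lfloor \ell/2 \rfloor$ spirals separating them) are exactly what force this retrace to terminate at $t_2$ before it can wrap around and collide. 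Combining the flat layout, the refinement bound, and the non-overlap argument establishes the theorem.
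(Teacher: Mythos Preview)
Your proposal is correct and follows essentially the same approach as the paper: the theorem is a summary statement whose proof is the accumulated construction of \autoref{sec:genus2}, and you have faithfully chained together \autoref{lem:root}, \autoref{lem:terminal}, \autoref{alg:unftree}, \autoref{thm:termleaves}, \autoref{cor:termleaves}, the root-linking schemes for zero, one, and two nonface-leaves, and the $(4\ell \times 8\ell)$ refinement analysis. Your explicit attention to the non-overlap argument for the retraced extension to $t_2$ is, if anything, slightly more detailed than the paper itself, which largely defers non-overlap to the monotone-staircase property inherited from~\cite{Chang2015}.
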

The unfolded planar polygon has $O(n^3)$ edges, for a polyhedron
of $n$ vertices:
$\ell = O(n)$ and there are $O(\ell^2)$ grid edges per face, with $O(n)$ faces.

\section{Conclusion}
It is not evident how to push the techniques common
to~\cite{Damian-Flatland-O'Rourke-2007-epsilon},
\cite{Damian-Demaine-Flatland-2014-delta}, and
\cite{Chang2015} to unfold polyhedra of genus $g \ge 3$,
the next frontier in this line of research.
Both~\autoref{lem:root} (existence of a face-node to serve as root of $T$) and
\autoref{thm:termleaves}
(the {\sc RimUnfoldingTree} algorithm leads to at most $g \le 2$ nonface-leaves)
are crucial in the unfolding algorithm 
described in~\autoref{sec:genus2}.
The final stitching together of the spiral paths
relies on there being at most two nonface-leaves of the unfolding tree $T$.

On the other hand, it is not difficult to unfold the
genus-$3$ polyhedron shown in~\autoref{fig:g0dir}b in an ad-hoc manner.
The challenge is to find a generic algorithm for genus-$3$ and beyond.

\medskip
\noindent
{\bf Acknowledgement.} We thank all the participants of the 31st Bellairs Winter Workshop on Computational Geometry for a fruitful and collaborative environment. In particular, we thank Sebastian Morr for important discussions related to~\autoref{thm:termleaves}, and to the stitching of unfolding strips at the root node. 




\bibliographystyle{alpha}
\bibliography{unfolding}

\end{document}